\newcommand{\set}[1]{\{#1\}}
\newcommand{\Set}[2]{\{#1\,|\,#2\}}
\newcommand{\Q}{\Bbb{Q}}
\newcommand{\eps}{\varepsilon}
\renewcommand{\=}{\!=\!} 
\newcommand{\Com}{\mathsf{Com}}
\newcommand{\Acc}{\mathsf{Acc}}
\newcommand{\Prob}{\mathrm{Prob}}
\newcommand{\dProb}{\mathrm{Prob}^*}
\newcommand{\hP}{P}
\newcommand{\hQ}{Q}
\newcommand{\dP}{\hat{P}}
\newcommand{\dQ}{\hat{Q}}
\newcommand{\etal}{{\it et~al}}
\newtheorem{theorem}{Theorem}[section]
\newtheorem{definition}[theorem]{Definition}
\newtheorem{remark}[theorem]{Remark}
\newtheorem{proposition}[theorem]{Proposition}
\newtheorem{lemma}[theorem]{Lemma}
\title{Multi-Prover Commitments Against Non-Signaling Attacks}
\author{Serge Fehr and Max Fillinger\\[2ex]
\em\normalsize Centrum Wiskunde \& Informatica (CWI), Amsterdam, The Netherlands\\
{\normalsize \tt \{serge.fehr,M.J.Fillinger\}@cwi.nl}}
\date{}
\begin{document}

\maketitle


\begin{abstract}
We reconsider the concept of two-prover (and more generally: 
multi-prover) commitments, as introduced in the late eighties in the 
seminal work by Ben-Or \etal. As was recently shown by Cr{\'e}peau 
\etal., the security of known two-prover commitment schemes not only 
relies on the explicit assumption that the two provers cannot 
communicate, but also depends on what their information processing 
capabilities are. For instance, there exist schemes that are secure 
against classical provers but insecure if the provers have {\em quantum} 
information processing capabilities, and there are schemes that resist 
such quantum attacks but become insecure when considering general 
so-called {\em non-signaling} provers, which are restricted {\em solely} 
by the requirement that no communication takes place.

This poses the natural question whether there exists a two-prover 
commitment scheme that is secure under the {\em sole} assumption that 
no communication takes place, and that does not rely on any further 
restriction of the information processing capabilities of the dishonest 
provers; no such scheme is known. 

In this work, we give strong evidence for a negative answer: 
we show that any single-round 
two-prover commitment scheme can be broken by a non-signaling attack. 
Our negative result is as bad as it can get: for any candidate scheme 
that is (almost) perfectly hiding, there exists a strategy that allows 
the dishonest provers to open a commitment to an arbitrary bit (almost) 
as successfully as the honest provers can open an honestly prepared 
commitment, i.e., with probability (almost) $1$ in case of a perfectly 
sound scheme. 
In the case of multi-round schemes, our impossibility result is restricted to perfectly hiding schemes. 

On the positive side, we show that the impossibility result can be 
circumvented by considering {\em three} provers instead: there exists a 
three-prover commitment scheme that is secure against arbitrary 
non-signaling attacks.
\end{abstract}

\section{Introduction}

\paragraph{\bf Background. } A commitment scheme is an important primitive in
theoretical cryptography with various applications, for instance to
zero-knowledge proofs and multiparty computation, which themselves are
fundamentally important concepts in modern cryptography.  For a commitment
scheme to be secure, it must be {\em hiding} and {\em binding}. The former
means that after the commit phase, the committed value is still hidden from the
verifier, and the latter means that the prover (also referred to as committer)
can open a commitment only to one value.  Unfortunately, a commitment scheme
cannot be unconditionally hiding {\em and} unconditionally binding at the same
time. This is easy to see in the classical setting, and holds as well when using quantum communication~\cite{Mayers97,LC97}. Thus, we have to put some
limitation on
the capabilities of the dishonest party.  One common approach is to assume that
the dishonest prover (or, alternatively, the dishonest verifier) has limited
computing resources, so that he cannot solve certain computational problems
(like factoring large integers).  Another approach was suggested by Ben-Or,
Goldwasser, Kilian and Wigderson in their seminal paper~\cite{BGKW88} in the
late eighties. They assume that the prover consists of two (or more) agents
that cannot communicate with each other, and they show the existence of a
secure commitment scheme in this two-prover setting. Based on this two-prover
commitment scheme, they then show that every language in NP has a two-prover
perfect zero-knowledge interactive proof system (though there are some subtle
issues in this latter result, as discussed in~\cite{Yang13}). 

A simple example of a two-prover commitment scheme, due to~\cite{CSST11}, is the following. The verifier chooses a uniformly random string $a \in \set{0,1}^n$ and sends it to the first prover, who sends back $x := r \oplus a \cdot b$ as the commitment for bit $b \in \set{0,1}$, where $r \in \set{0,1}^n$ is a uniformly random string known (only) to the two provers, and where ``$\oplus$'' is bit-wise XOR and ``$\cdot$'' scalar multiplication (of the scalar $b$ with the vector~$a$). In order to open the commitment (to $b$), the second prover sends back $y := r$, and the verifier checks the obvious: whether $y = x \oplus a \cdot b$. It is clear that this scheme is hiding: $x := r \oplus a \cdot b$ is uniformly random and independent of $a$ no matter what $b$ is, and the intuition behind the binding property is the following. In order to open the commitment to $b = 0$, the second prover needs to announce $y = x$; in order to open to $b = 1$, he needs to announce $y = x \oplus a$. Therefore, in order to open to {\em both}, he must know $x$ {\em and} $x \oplus a$, which means he knows $a$, but this is a contradiction to the no-communication assumption, because $a$ was sent only to the first prover. 

In~\cite{CSST11}, Cr{\'e}peau, Salvail, Simard and Tapp show that, as a 
matter of fact, the security of such two-prover commitment schemes not 
only relies on the explicit assumption that the two provers cannot 
communicate, but the security also crucially depends on the information 
processing capabilities of the dishonest provers. Indeed, they show that a slight variation of the above two-prover commitment scheme (where some slack is given to the verification $y = x \oplus a \cdot b$) is secure against 
classical provers, but is completely insecure if the provers have 
{\em quantum} information processing capabilities and can obtain $x$ and $y$ by means of doing local measurements on an entangled quantum state.%
\footnote{The above intuition for the binding property of the scheme (which also applies to the variation considered in~\cite{CSST11}) fails in the quantum setting where $x$ and $y$ are obtained by means of {\em destructive} measurements.  }
Furthermore, they 
show that the above example two-prover commitment scheme remains secure 
against such quantum attacks, but becomes insecure against 
so-called {\em non-signaling} provers. 
The notion of non-signaling was first introduced by Khalfin and Tsirelson \cite{KT85} and by
Rastall \cite{Ras85} in the context of Bell-inequalities, and later
reintroduced by Popescu and Rohrlich \cite{PR94}. 
Non-signaling provers are 
restricted {\em solely} by the requirement that no communication takes 
place\,---\,no additional restriction limits their information 
processing capabilities (not even the laws of quantum mechanics)\,---\,and
thus considering non-signaling provers is the {\em minimal} assumption 
for the two-prover setting to make sense. 

This gives rise to the following question. Does there exist a two-prover 
commitment scheme that is secure against arbitrary non-signaling 
provers? Such a scheme would {\em truly} be based on the sole 
assumption that the provers cannot communicate. No such scheme is known. 
Clearly, from a practical point of view, asking for such a scheme may be 
overkill; given our strong believe in quantum mechanics, relying on a 
scheme that resists quantum attacks seems to be a safe bet.
But from a theoretical perspective, this question is certainly in line 
with the general goal of theoretical cryptography: to find the strongest 
possible security based on the weakest possible assumption.

\paragraph{\bf Our Results. }
In this work, we give strong evidence for a negative answer: 
we show that there exists no single-round two-prover commitment scheme that is secure against 
general non-signaling attacks. Our impossibility result is as strong as 
it can get. We show that for any candidate single-round two-prover commitment scheme that is 
(almost) perfectly hiding, the binding property can be (almost) 
completely broken: there exists a non-signaling strategy that allows the 
dishonest provers to open a commitment to an arbitrary bit (almost) 
as successfully as the honest provers can open an honestly prepared 
commitment, i.e., with probability (almost) $1$ in case of a perfectly 
sound scheme. Furthermore, for a restricted but natural class of 
schemes, namely for schemes that have the same communication pattern as the above example scheme, our impossibility result is tight: for every (rational) 
parameter $0 < \eps \leq 1$ there exists a perfectly sound two-prover 
commitment scheme that is $\eps$-hiding and as binding as allowed by our 
negative result (which is almost not binding if $\eps$ is small). 

In the case of multi-round schemes, our impossibility result is limited and applies to perfectly hiding schemes only. Proving the impossibility of non-perfectly-hiding multi-round schemes remains open. 

On the positive side, we show the existence of a secure {\em 
three}-prover commitment scheme against non-signaling attacks. Thus, our  impossibility result can be circumvented by considering three 
instead of two provers.

\paragraph{\bf Related Work. }

Two-prover commitments are closely related to {\em relativistic commitments}, as introduced by Kent in~\cite{Kent99}. In a nutshell, a relativistic commitment scheme is a two-prover commitment scheme where the no-communication requirement is enforced by having the actions of the two provers separated by a space-like interval, i.e., the provers are placed far enough apart, and the scheme is executed quickly enough, so that no communication can take place by the laws of special relativity. As such, our impossibility result immediately implies impossibility of relativistic commitment schemes of the form we consider (e.g., we do not consider quantum schemes) against general non-signaling attacks. 

Very generally speaking, and somewhat surprisingly, the (in)security of cryptographic primitives against non-signaling attacks may have an impact on more standard cryptographic settings, as was recently demonstrated by Kalai, Raz and Rothblum~\cite{KRR14}, who showed the (computational) security of a {\em delegation scheme} based on the security of an underlying multi-party interactive proof system against non-signaling (or statistically-close-to-non-signaling) adversaries. 


\section{Preliminaries}
\subsection{(Conditional) Distributions}

\newcommand{\R}{\Bbb{R}}

For the purpose of this work, a {\em (probability) distribution} is a 
function $p: {\cal X} \rightarrow \R$, $x \mapsto p(x)$, where $\cal X$ 
is a finite non-empty set, with the properties that $p(x) \geq 0$ for 
every $x \in \cal X$ and $\sum_{x \in \cal X} p(x) = 1$.
For any subset $\Lambda \subset \cal X$, $p(\Lambda)$ is naturally 
defined as $p(\Lambda) = \sum_{x \in \Lambda} p(x)$, and it holds that
\begin{equation}
\label{eq:pr_sum}
p(\Lambda) + p(\Gamma) = p(\Lambda \cup \Gamma) - p(\Lambda \cap \Gamma) 
\leq 1 + p(\Lambda \cap \Gamma)
\end{equation}
for all $\Lambda,\Gamma \subset \cal X$.
A probability distribution is {\em bipartite} if it is of the form $p: 
{\cal X} \times {\cal Y} \rightarrow \R$.
In case of such a bipartite distribution $p(x,y)$, probabilities like 
$p(x\!=\!y)$, $p(x\!=\!f(y))$, $p(x\!\neq\!y)$ etc. are naturally 
understood as
$$
p(x\!=\!y) = p(\Set{(x,y) \in {\cal X} \times {\cal Y}}{x = y}) = 
\sum_{x \in {\cal X}, y \in {\cal Y} \atop \text{s.t. } x = y} p(x,y)
$$
etc. Also, for a bipartite distribution  $p: {\cal X} \times {\cal Y} 
\rightarrow \R$, the {\em marginals} $p(x)$ and $p(y)$ are given by 
$p(x) = \sum_y p(x,y)$ and $p(y) = \sum_x p(x,y)$, respectively. We note 
that this notation may lead to an ambiguity when writing $p(w)$ for some 
$w \in {\cal X} \cap {\cal Y}$; we avoid this by writing $p(x\!=\!w)$ or 
$p(y\!=\!w)$ instead, which are naturally understood.
The above obviously extends to arbitrary {\em multipartite} 
distributions $p(x,y,z)$ etc.

A {\em conditional (probability) distribution} is a function $p: {\cal 
X} \times {\cal A} \rightarrow \R$, $(x,a) \mapsto p(x|a)$, for finite 
non-empty sets $\cal X$ and $\cal A$, such that for every fixed $a^* \in 
\cal A$, the function $p(x|a^*)$ is a probability distribution in the 
above sense, which we also write as $p(x|a\!=\!a^*)$. As such, the above 
naturally extends to bi- and multipartite conditional probability 
distributions; e.g., if $p(x,y|a,b)$ is a conditional distribution then 
$p(x|a,b)$, $p(y|a,b)$, $p(x\!=\!y|a,b)$ etc. are all naturally defined. 
However, we emphasize that for instance $p(x|a)$ is in general {\em 
not} well defined\,---\,unless the corresponding conditional distribution 
$p(b|a)$ is given, or unless $p(x|a,b)$ does not depend on $b$. 

\begin{remark}\label{rem:convention}
By convention, we write $p(x|a,b) = p(x|a)$ to express that $p(x|a,b)$ does not depend on $b$, i.e., that $p(x|a,b_1) = p(x|a,b_2)$ for all $b_1$ and $b_2$, and as such $p(x|a)$ {\em is} well defined and equals $p(x|a,b)$. 
\end{remark}

A distribution $\delta(x)$ over $\cal X$ is called a {\em Dirac} 
distribution if there exists $x^* \in \cal X$ so that $\delta(x\!=\!x^*) 
= 1$, and a conditional distribution $\delta(x|a)$ over $\cal X$ is 
called a conditional {\em Dirac} distribution if $\delta(x|a\!=\!a^*)$ 
is a {\em Dirac} distribution for every $a^* \in \cal A$, i.e., for 
every $a^* \in \cal A$ there exists $x^* \in \cal X$ so that 
$\delta(x\!=\!x^*|a\!=\!a^*) = 1$.

Note that we often abuse notation slightly and simply write $p(x)$ 
instead of $p: {\cal X} \rightarrow \R$, $x \mapsto p(x)$; furthermore, 
we may use $p$ for different distributions and distinguish between them 
by using different names for the variable, like when we consider the two 
marginals $p(x)$ and $p(y)$ of a bipartite distribution $p(x,y)$.
Finally, given two distributions $p(x_0)$ and $q(x_1)$ over the same set 
${\cal X}$ (and similarly if we use the above convention and denote them by 
$p(x_0)$ and $p(x_1)$ instead), we write $p(x_0) = q(x_1)$ to denote 
that $p(x_0\!=\!w) = q(x_1\!=\!w)$ for all $w \in {\cal X}$. In a 
corresponding way, equalities like $p(x_0,x'_0,y) = q(x_1,x'_1,y)$ 
should be understood; in situations where we feel it is helpful, we may 
clarify that ``$x_0$ is associated with $x_1$, and $x'_0$ with $x'_1$'';
similarly for conditional distributions.

\subsection{Gluing Together Distributions}

We recall the definition of the statistical distance.

\begin{definition}
Let $p(x_0)$ and $p(x_1)$ be two distributions over the same set ${\cal 
X}$.%
\footnote{This is without loss of generality: the domain can always be 
extended by including zero-probability elements. } Then, their 
statistical distance is defined as
$$
d\bigl(p(x_0),p(x_1)\bigr) = \frac{1}{2}\cdot \sum_{x\in\mathcal 
X}\bigl|p(x_0\!=\!x) - p(x_1\!=\!x)\bigr| \, .
$$
\end{definition}
The following property of the statistical distance is well known
(see~e.g. \cite{RK05}).

\begin{proposition}\label{prop:gluing}
Let $p(x_0)$ and $p(x_1)$ be two distributions over the same set ${\cal 
X}$ with $d\bigl(p(x_0),p(x_1)\bigr) = \varepsilon$. Then, there exists 
a distribution $p'(x_0,x_1)$ over ${\cal X} \times {\cal X}$ with 
marginals $p'(x_0) = p(x_0)$ and $p'(x_1) = p(x_1)$, and such that 
$p'(x_0 \!\neq\!x_1) = \varepsilon$.
\end{proposition}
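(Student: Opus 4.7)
The plan is to construct the joint distribution $p'(x_0,x_1)$ explicitly by coupling the two marginals via their common overlap, in the classical "maximal coupling" style.

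First I would set $m(x) := \min\bigl(p(x_0\!=\!x),\, p(x_1\!=\!x)\bigr)$ for every $x \in \mathcal{X}$. A short calculation shows that $\sum_x m(x) = 1 - \varepsilon$: indeed, writing $|p(x_0\!=\!x) - p(x_1\!=\!x)| = p(x_0\!=\!x) + p(x_1\!=\!x) - 2m(x)$ and summing over $x$ gives $2\varepsilon = 2 - 2\sum_x m(x)$. Assuming $0 < \varepsilon < 1$ (the boundary cases are handled separately and easily: take $p'(x_0,x_1) = p(x_0\!=\!x_0)\cdot [x_0\!=\!x_1]$ if $\varepsilon = 0$ and $p'(x_0,x_1) = p(x_0)\,p(x_1)$ if $\varepsilon = 1$), define
\[
p'(x_0,x_1) \;=\; m(x_0)\cdot [x_0\!=\!x_1] \;+\; \frac{1}{\varepsilon}\cdot \bigl(p(x_0\!=\!x_0) - m(x_0)\bigr)\cdot \bigl(p(x_1\!=\!x_1) - m(x_1)\bigr).
\]
The idea is: with total mass $1 - \varepsilon$ the coupling places itself on the diagonal according to $m$, while with total mass $\varepsilon$ it samples $x_0$ and $x_1$ independently from the normalized "excess" distributions $(p(x_0) - m)/\varepsilon$ and $(p(x_1) - m)/\varepsilon$.

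Next I would verify the marginals. Summing $p'(x_0,x_1)$ over $x_1$ yields
\[
m(x_0) + \frac{1}{\varepsilon}\bigl(p(x_0\!=\!x_0) - m(x_0)\bigr)\cdot \sum_{x_1}\bigl(p(x_1\!=\!x_1) - m(x_1)\bigr) \;=\; m(x_0) + \bigl(p(x_0\!=\!x_0) - m(x_0)\bigr) \;=\; p(x_0\!=\!x_0),
\]
using $\sum_{x_1}\bigl(p(x_1\!=\!x_1) - m(x_1)\bigr) = 1 - (1 - \varepsilon) = \varepsilon$; the argument for the other marginal is symmetric.

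Finally, the probability $p'(x_0\!\neq\!x_1)$ comes entirely from the second summand, since the first summand is supported on the diagonal. The key observation is that the excess functions $p(x_0\!=\!x) - m(x)$ and $p(x_1\!=\!x) - m(x)$ have \emph{disjoint} supports: if $p(x_0\!=\!x) > p(x_1\!=\!x)$ then $m(x) = p(x_1\!=\!x)$, so the second excess vanishes at $x$, and vice versa. Therefore, whenever the product $\bigl(p(x_0\!=\!x_0) - m(x_0)\bigr)\bigl(p(x_1\!=\!x_1) - m(x_1)\bigr)$ is nonzero we must have $x_0 \neq x_1$, so the independent-sampling part contributes $x_0\!\neq\!x_1$ with probability $1$, and its total mass is $\varepsilon\cdot \varepsilon / \varepsilon = \varepsilon$. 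This gives $p'(x_0\!\neq\!x_1) = \varepsilon$, as required.

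The only subtle point — really the only thing worth checking carefully — is the disjoint-support observation, since everything else is bookkeeping about marginals and normalizations. No obstacle beyond that; the result is standard.
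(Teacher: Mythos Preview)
Your construction is the standard maximal coupling, and it is correct; the disjoint-support observation is exactly the crux and you have it right. The paper itself does not give a proof of this proposition at all: it simply states the result as well known and cites an external reference, so there is no ``paper's proof'' to compare against. Your write-up therefore supplies what the paper omits, and the argument you give is precisely the textbook one behind such citations.
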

The following is an immediate consequence.

\begin{lemma}\label{lemma:gluing}
Let $p(x_0,y_0)$ and $p(x_1,y_1)$ be distributions with 
$d\bigl(p(x_0),p(x_1)\bigr) = \eps$. Then, there exists a distribution 
$p'(x_0,x_1,y_0,y_1)$ with marginals $p'(x_0,y_0) = p(x_0,y_0)$ and 
$p'(x_1,y_1) = p(x_1,y_1)$, and such that $p'(x_0 \!\neq\!x_1) = 
\varepsilon$ and, as a consequence, 
$d\bigl(p'(x_0,y_1),p'(x_1,y_1)\bigr) \leq \eps$.
\end{lemma}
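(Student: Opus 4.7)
The plan is to build $p'$ by first gluing the $x$-marginals via Proposition \ref{prop:gluing}, then extending to $y_0$ and $y_1$ in the natural conditionally-independent way, and finally deriving the statistical-distance bound from the disagreement probability $p'(x_0\neq x_1)=\eps$.

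First I would apply Proposition \ref{prop:gluing} to the marginals $p(x_0)$ and $p(x_1)$, obtaining a joint distribution $q(x_0,x_1)$ with $q(x_0)=p(x_0)$, $q(x_1)=p(x_1)$, and $q(x_0\!\neq\!x_1)=\eps$. Next, I would define
\[
p'(x_0,x_1,y_0,y_1) \;:=\; q(x_0,x_1)\,p(y_0|x_0)\,p(y_1|x_1),
\]
where $p(y_0|x_0)$ and $p(y_1|x_1)$ are the conditionals derived from the given $p(x_0,y_0)$ and $p(x_1,y_1)$ (on $x$-values of nonzero marginal probability; on the remaining values the conditional can be set arbitrarily). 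Summing out $(x_1,y_1)$ and using $q(x_0)=p(x_0)$ recovers $p'(x_0,y_0)=p(x_0,y_0)$; by symmetry, $p'(x_1,y_1)=p(x_1,y_1)$. Marginalizing $y_0$ and $y_1$ gives $p'(x_0,x_1)=q(x_0,x_1)$, so $p'(x_0\!\neq\!x_1)=\eps$.

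The remaining step is the bound $d\bigl(p'(x_0,y_1),p'(x_1,y_1)\bigr)\leq\eps$, which I expect to be the main (though mild) obstacle since one must track the same variable $y_1$ on both sides. I would prove the general fact: whenever $(X_0,X_1,Y)$ are jointly distributed under some law $r$, one has $d\bigl(r(x_0,y),r(x_1,y)\bigr)\leq r(x_0\!\neq\!x_1)$. To see this, write
\[
r(x_0\!=\!w,y\!=\!v)-r(x_1\!=\!w,y\!=\!v)=\sum_{x_0,x_1}\bigl(\mathbf{1}_{x_0=w}-\mathbf{1}_{x_1=w}\bigr)r(x_0,x_1,y\!=\!v),
\]
observe that the summand vanishes when $x_0=x_1$, apply the triangle inequality, and note that for $x_0\neq x_1$ the sum $\sum_w\bigl|\mathbf{1}_{x_0=w}-\mathbf{1}_{x_1=w}\bigr|$ equals $2$; summing over $v$ then gives a total contribution of $2\,r(x_0\!\neq\!x_1)$, and the factor $1/2$ in the definition of $d$ yields the claimed bound.

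Instantiating this general fact with $r=p'$ and $Y=y_1$ immediately gives $d\bigl(p'(x_0,y_1),p'(x_1,y_1)\bigr)\leq p'(x_0\!\neq\!x_1)=\eps$, completing the proof.
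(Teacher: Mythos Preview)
Your proof is correct and follows essentially the same approach as the paper: the construction of $p'$ via Proposition~\ref{prop:gluing} and the conditionally-independent extension $p'(x_0,x_1,y_0,y_1)=p'(x_0,x_1)\,p(y_0|x_0)\,p(y_1|x_1)$ is identical. For the final statistical-distance bound the paper conditions on the event $\{x_0=x_1\}$ versus its complement and notes the first parts coincide, whereas you unfold the same idea via indicator functions; these are two presentations of the same argument.
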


\begin{proof}
We first apply Proposition~\ref{prop:gluing} to $p(x_0)$ and $p(x_1)$ to 
obtain $p'(x_0,x_1)$, and then we set $$p'(x_0,x_1,y_0,y_1) = 
p'(x_0,x_1) \cdot p(y_0|x_0) \cdot p(y_1|x_1) \, .$$
The claims on the marginals and on $p'(x_0 \!\neq\!x_1)$ follow 
immediately, and for the last claim we note that
\begin{align*}
p'(x_0,y_1) &= p'(x_0 \!=\!x_1) \cdot p'(x_0,y_1|x_0 \!=\!x_1) + p'(x_0 
\!\neq\!x_1) \cdot p'(x_0,y_1|x_0 \!\neq\!x_1) \\
&= p'(x_0 \!=\!x_1) \cdot p'(x_1,y_1|x_0 \!=\!x_1) + p'(x_0 \!\neq\!x_1) 
\cdot p'(x_0,y_1|x_0 \!\neq\!x_1)
\end{align*}
and
\begin{align*}
p'(x_1,y_1) &= p'(x_0 \!=\!x_1) \cdot p'(x_1,y_1|x_0 \!=\!x_1) + p'(x_0 
\!\neq\!x_1) \cdot p'(x_1,y_1|x_0 \!\neq\!x_1)
\end{align*}
and the claim follows because $p'(x_0 \!\neq\!x_1) = \eps$.
\end{proof}

\begin{remark}
Note that due to the consistency of the marginals, it makes sense 
to write $p(x_0,x_1,y_0,y_1)$ instead of $p'(x_0,x_1,y_0,y_1)$.
We say that we ``glue together'' $p(x_0,y_0)$ and $p(x_1,y_1)$ along 
$x_0$ and $x_1$. 
\end{remark}

\begin{remark}
In the special case where $p(x_0)$ and $p(x_1)$ are identically distributed,
i.e., $d\bigl(p(x_0),p(x_1)\bigr) = 0$, we obviously have $p(x_0,y_1) = p(x_1,y_1)$.
\end{remark}

\begin{remark}\label{rem:commute}
It is easy to see from the proof of Lemma~\ref{lemma:gluing} that the following natural property holds. If $p(x_0,x_1,y_0,y_1,y'_0,y'_1)$ is obtained by gluing together $p(x_0,y_0,y'_0)$ and $p(x_1,y_1,y'_1)$ along $x_0$ and $x_1$, then the marginal $p(x_0,x_1,y_0,y_1)$ coincides with the distribution obtained by gluing together the marginals $p(x_0,y_0)$ and $p(x_1,y_1)$ along $x_0$ and $x_1$. 
\end{remark}

\section{Bipartite Systems and Two-Prover Commitments}

\subsection{One-Round Bipartite Systems}

Informally, a {\em bipartite system} consists of two subsystem, which we refer to as the left and the right subsystem. Upon input $a$ to the left and input $a'$ to the right subsystem, the left subsystem outputs $x$ and the right subsystem outputs~$x'$ (see Figure~\ref{fig:systems}, left). Formally, the behavior of such a system is given by a conditional distribution $q(x,x'|a,a')$, with the interpretation that upon input pair $(a,a')$, the system outputs a specific pair $(x,x')$ with probability $q(x,x'|a,a')$. Note that we leave the sets ${\cal A}, {\cal A}',{\cal X}$ and ${\cal X}'$, from which $a,a',x$ and $x'$ are respectively sampled, implicit. 

If we do not put any restriction upon the system, then {\em any} conditional distribution $q(x,x'|a,a')$ is eligible, i.e., describes a bipartite system. However, we are interested in systems where the two subsystems cannot communicate with each other. How exactly this requirement restricts $q(x,x'|a,a')$ depends on the available ``resources''. For instance, if the two subsystems are deterministic, i.e., compute $x$ and $x'$ as {\em deterministic} functions of $a$ and $a'$ respectively, then this restricts $q(x,x'|a,a')$ to be of the form $q(x,x'|a,a') = \delta(x|a) \cdot \delta(x'|a')$ for conditional Dirac distributions $\delta(x|a)$ and $\delta(x'|a')$. If in addition to allowing them to compute deterministic functions, we give the two subsystem {\em shared randomness}, then $q(x,x'|a,a')$ may be of the form 
$$
q(x,x'|a,a')  = \sum_r p(r) \cdot \delta(x|a,r) \cdot \delta(x'|a',r)
$$
for a distribution $p(r)$ and conditional Dirac distributions $\delta(x|a,r)$ and $\delta(x'|a',r)$. Such a system is called {\em classical} or {\em local}. 
Interestingly, this is not the end of the story. By the laws of {\em quantum mechanics}, if the two subsystems share an entangled quantum state and obtain $x$ and $x'$ without communication as the result of local measurements that may depend on $a$ and $a'$, respectively, then this gives rise to conditional distributions $q(x,x'|a,a')$ of the form 
$$
q(x,x'|a,a') = \big\langle \psi\big| \big(E_x^a \otimes F_{x'}^{a'}\big) \big|\psi\big\rangle \, ,
$$
where $|\psi\rangle$ is a quantum state and $\set{E_x^a}_x$ and $\set{F_{x'}^{a'}}_{x'}$ are so-called POVMs. What this exactly means is not important for us; what {\em is} important is that this leads to a {\em strictly larger} class of bipartite systems. This is typically referred to as a {\em violation of Bell inequalities}~\cite{bell}, and is nicely captured by the notion of {\em nonlocal games}. A famous example is the so-called CHSH-game~\cite{chsh}, which is closely connected to the example two-prover commitment scheme from the introduction, and which shows that the variant considered in~\cite{CSST11} is insecure against quantum attacks.

The largest possible class of bipartite systems that is compatible with the requirement that the two subsystem do not communicate, but otherwise does not assume anything on the available resources and/or the underlying physical theory, are the so-called {\em non-signaling} systems, defined as follows. 

\begin{definition}\label{def:ns}
A conditional distribution $q(x,x'|a,a')$ is called a {\em non-signaling (one-round) bipartite system} if it satisfies
$$
q(x|a,a') = q(x|a) \qquad\text{\rm (NS)}
$$
as well as with the roles of the primed and unprimed variables exchanged, i.e., 
$$
q(x'|a,a') = q(x'|a') \qquad\text{\rm (NS$'$)}
$$
\end{definition}
Recall that, by the convention in Remark~\ref{rem:convention}, the equality (NS) is to be understood in the sense that $q(x|a,a')$ does not depend on $a'$, i.e., that $q(x|a,a'_1) = q(x|a,a'_2)$ for all $a'_1,a'_2$, and correspondingly for (NS$'$). 

We emphasize that this is the {\em minimal} necessary condition for the requirement that the two subsystems do not communicate. Indeed, if e.g.~$q(x|a,a'_1) \neq q(x|a,a'_2)$, i.e., if the input-output behavior of the left subsystem depends on the input to the right subsystem, then the system can be used to communicate by giving input $a'_1$ or $a'_2$ to the right subsystem, and observing the input-output behavior of the left subsystem. Thus, in such a system, communication does take place. 


The non-signaling requirement for a bipartite system is\,---\,conceptually and formally\,---\,equivalent to requiring that the two subsystems can (in principle) be queried {\em in any order}. 
Conceptually, it holds because the left subsystem should be able to deliver its outputs {\em before} the right subsystem has received any input if and only if the output does not depend on the right subsystem's input (which means that no information is communicated from right to left), and similarly the other way round. 
And, formally, we see that
the non-signaling requirement from Definition~\ref{def:ns} is equivalent to asking that $q(x,x'|a,a')$ can be written as 
$$
q(x,x'|a,a') = q(x|a) \cdot q(x'|x,a,a')
\quad\text{and}\quad
q(x,x'|a,a') = q(x'|a') \cdot q(x|x',a,a')
$$
for some respective conditional distributions $q(x|a)$ and $q(x'|a')$.
This characterization is a convenient way to ``test'' whether a given bipartite system is non-signaling without doing the maths. 

Clearly, all classical systems are non-signaling. Also, any quantum system is non-signaling.%
\footnote{Indeed, the two parts of an entangled quantum state can be measured in any order, and the outcome of the first measurement does not depend on how the other part is going to be measured.}
But there are non-signaling systems that are not quantum (and thus in particular not classical). The typical example is the {\em NL-box} (non-local box; also known as {\em PR-box})~\cite{PR94}, which, upon input bits $a$ and $a'$ outputs {\em random} output bits $x$ and $x'$ subject to
$$
x \oplus x' = a \cdot a' \, .
$$
This system is indeed non-signaling, as it can be queried in any order: submit $a$ to the left subsystem to obtain a uniformly random $x$, and then submit $a'$ to the right subsystem to obtain $x' := x \oplus a \cdot b$, and correspondingly the other way round. 

\subsection{Two-Round Systems}\label{sec:2round}

We now consider bipartite systems as discussed above, but where one can interact with the two subsystems multiple times. We restrict to two rounds: after having input $a$ to the left subsystem and obtained $x$ as output, one can now input $b$ into the left subsystem and obtain output $y$, and similarly with the right subsystem (see Figure~\ref{fig:systems}, right). 
In such a two-round setting, the non-signaling condition needs to be paired with {\em causality}, which captures that the output of the first round does not depend on the input that will be given in the second round.  

\begin{figure}[h]
  \centering
  \scalebox{0.8}{ \input{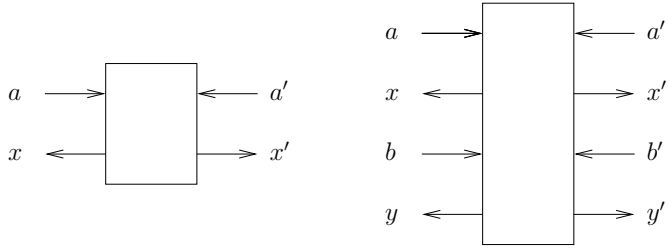} }
  \caption{A one-round (left) and two-round (right) bipartite system.}\label{fig:systems}
  \label{fig:OneTwoRound}
\end{figure}

\begin{definition}\label{def:ns2}
A conditional distribution $q(x,x',y,y'|a,a',b,b')$ is called a {\em non-signaling two-round bipartite system} if it satisfies the following two causality constraints
\begin{align*}
&q(x,x'|a,a',b,b') = q(x,x'|a,a') &\text{\rm (C1)} \\
\text{and}\quad  &q(x'|x,y,a,a',b,b') = q(x'|x,y,a,a',b) &\text{\rm (C2)}
\end{align*}
and the following two non-signaling constraints
\begin{align*}
&q(x,y|a,a',b,b') = q(x,y|a,b) &\text{\rm (NS1)}\\
\text{and}\quad &q(y|x,x',a,a',b,b') = q(y|x,x',a,a',b) &\quad\text{\rm (NS2)}
\end{align*}
as well as with the roles of the primed and unprimed variables exchanged. 
\end{definition}
(C1) captures causality of the overall system, i.e., when considering the left and the right system as one ``big'' multi-round system. (C2)~captures that no matter what interaction there is with the left system, the right system still satisfies causality. Similarly, (NS1) captures that the left and the right system are non-signaling over both rounds, and (NS2) captures that no matter what interaction there was in the first round, the left and the right system remain non-signaling in the second round. 

It is rather clear that these are {\em necessary} conditions; we argue that they are {\em sufficient} to capture a non-signaling two-round system in Appendix~\ref{app:sufficient}.

\subsection{Two-Prover Commitments}

We consider two-prover commitments of the following form. To commit to bit $b$, the two provers $\hP$ and $\hQ$ receive respective ``questions" $a$ and $a'$ from the verifier $V$, and they compute, without communicating with each other, respective replies $x$ and $x'$ and send them to $V$. 
To open the commitment, $\hP$ and $\hQ$ send respectively $y$ and $y'$. Finally, $V$ performs some check to decide whether to accept or not.  

In case of {\em classical} provers $\hP$ and $\hQ$, restricting the opening phase to one round with one-way communication is without loss of generality: one may always assume that in the opening phase $\hP$ and $\hQ$ simply reveal the shared randomness, and $V$ checks whether $x$ and $x'$ had been correctly computed, consistent with the claimed bit $b$. 
Restricting the commit phase to one round is, as far as we can see, {\em not} without loss of generality; we discuss the multi-round case later. 

Formally, this can be captured as follows.   

\begin{definition}\label{def:Com}
A {\em (single-round) two-prover commitment scheme} $\Com$ consists of a probability distribution $p(a,a')$, two conditional distributions $p_0(x,x',y,y'|a,a')$ and $p_1(x,x',y,y'|a,a')$, and an acceptance predicate $\Acc(x,x',y,y'|a,a',b)$. \\
We say that $\Com$ is classical/quantum/non-signaling if 
$p_0(x,x',y,y'|a,a')$ and $p_1(x,x',y,y'|a,a')$ are both 
classical/quantum/non-signaling when parsed as bipartite one-round systems $p_b((x,y),(x',y')|a,a')$. By default, any two-prover commitment scheme $\Com$ is assumed to be non-signaling. 
\end{definition}
The distribution $p(a,a')$ captures how $V$ samples the ``questions'' $a$ and $a'$, $p_b(x,x',y,y'|a,a')$ describes the choices of $x$ and $x'$ and of $y$ and $y'$, given that the bit to commit to is $b$, and $\Acc(x,x',y,y'|a,a',b)$ determines whether $V$ accepts the opening or not. 
Whether a scheme is classical, quantum or non-signaling captures the restrictions of the honest provers. 

Given a two-prover commitment scheme
$\Com$, we define
$$
\Prob[\Acc|b] := \sum_{a,a',x,x',y,y'}  p(a,a') \cdot p_b(x,x',y,y'|a,a') \cdot \Acc(x,x',y,y'|a,a',b) \, ,
$$
which is the probability that a correctly formed commitment to bit $b$ is successfully opened. 

\begin{definition} A commitment scheme $\Com$ is $\theta$-\em sound
\em if $\Prob _p[\Acc |b] \geq \theta$ for $b \in \set{0,1}$. We say that it is
\em perfectly sound \em if it is $1$-sound.  \end{definition}
It will be convenient to write $p(x_0,x_0',y_0,y_0'|a,a')$ instead of
$p_0(x,x',y,y'|a,a')$ and $p(x_1,x_1',y_1,y_1'|a,a')$ instead of
$p_1(x,x',y,y'|a,a')$. Switching to this notation, the hiding property is
expressed as follows.  \begin{definition} $\Com$ is called $\eps$-{\em hiding}
if $d\bigr(p(x_0,x_0'|a,a'),p(x_1,x_1'|a,a')\bigl) \leq \eps$ for
all $a,a'$. If $\Com$ is $0$-hiding, we also say it is \em perfectly hiding\em
.  \end{definition}
Capturing the binding property is more subtle. From the classical approach of
defining the binding property for a commitment scheme, one is tempted to
require that once the commit phase is over and $a,a',x$ and $x'$ are fixed,
adversarial provers $\dP$ and $\dQ$ cannot come up with an opening to $b = 0$
and {\em simultaneously} with an opening to $b = 1$, i.e., with $y_0,y'_0$  and
$y_1,y'_1$ such that $\Acc(x,x',y_0,y'_0|a,a',b\=0)$ and
$\Acc(x,x',y_1,y'_1|a,a',b\=1)$ are both satisfied (except with small
probability).  However, as pointed out by Dumais, Mayers and Salvail
\cite{DMS00}, in the context of a
general physical theory where $y$ and $y'$ may possibly be obtained as
respective outcomes of {\em destructive} measurements (as is the case in
quantum mechanics), such a definition is too weak. It does not exclude that
$\dP$ and $\dQ$ can freely choose to open the commitment to $b = 0$ or to $b =
1$, whatever they want, but they cannot do both {\em simultaneously}; once they
have produced one opening, their respective states got disturbed and the other
opening can then not be obtained anymore.  

Our definition for the binding property is based on the following game between the (honest) verifier $V$ and the adversarial provers $\dP$, $\dQ$. 

\begin{enumerate}\setlength{\parskip}{0.5ex}
\item The commit phase is executed: $V$ samples $a$ and $a'$ according to $p(a,a')$, and sends $a$ to $\dP$ and $a'$ to~$\dQ$, upon which $\dP$ and $\dQ$ send $x$ and $x'$ back to $V$, respectively.
\item $V$ sends a bit $b \in \set{0,1}$ to $\dP$ and $\dQ$.
\item $\dP$ and $\dQ$ try to open the commitment to $b$: they prepare $y$ and $y'$ and send them to $V$.
\item $V$ checks if the verification predicate $\Acc(x,x',y,y'|a,a',b)$ is satisfied. 
\end{enumerate}

We emphasize that even though in the actual binding game above, {\em the same}
bit $b$ is given to the two provers, we require that the response of the
provers is well determined by their strategy even in the case that
$b \neq b'$. Of course, if the provers are allowed to communicate, they are able
to detect when $b \neq b'$ and could reply with, e.g., $y = y' = \bot$ in that
case. However, if we restrict to non-signaling provers, we assume that it is
\em physically \em impossible for them to communicate with each other and
distinguish the case of $b = b'$ from $b\neq b'$.

As such, a non-signaling attack strategy against the binding property of a two-prover commitment scheme $\Com$ is given by a non-signaling two-round bipartite system $q(x,x',y,y'|a,a',b,b')$, as specified in Definition~\ref{def:ns2}. For any such bipartite system, representing a strategy for $\dP$ and $\dQ$ in the above game, the probability that $\dP$ and $\dQ$ win the game, in that $\Acc(x,x',y,y'|a,a',b)$ is satisfied when they have to open to the bit $b$, is given by
$$
\dProb_{q}[\Acc|b] := \sum_{a,a',x,x',y,y'}  p(a,a') \cdot q(x,x',y,y'|a,a',b,b) \cdot \Acc(x,x',y,y'|a,a',b) \, .
$$
We are now ready to define the binding property. 
\begin{definition}\label{def:binding}
A two-prover commitment scheme $\Com$ is $\delta$-{\em binding} (against non-signaling attacks) if it holds for any non-signaling two-round bipartite system $q(x,x',y,y'|a,a',b,b')$ that
$$
\dProb_{q}[\Acc|0] + \dProb_{q}[\Acc|1] \leq 1 + \delta \, .
$$
\end{definition}
In other words, a scheme is $\delta$-binding if in the above game the dishonest provers win with probability at most $(1+\delta)/2$ when $b \in \set{0,1}$ is chosen uniformly at random. 
If a commitment scheme is binding (for a small $\delta$) in the sense of Definition~\ref{def:binding}, then for any strategy $q$ for $\dP$ and $\dQ$, they can just as well {\em honestly} commit to a bit $\hat{b}$, where $\hat{b}$ is set to $0$ with probability $p_0 = \dProb_{q}[\Acc|0]$ and to $1$ with probability $p_1 = 1 - p_0 \approx \dProb_{q}[\Acc|1]$, and they will have essentially the same respective success probabilities in opening the commitment to $b = 0$ and to $b = 1$. 


\section{Impossibility of Two-Prover Commitments}

In this section, we show impossibility of secure single-round two-prover commitments against arbitrary non-signaling attacks. We start with the analysis of a restricted class of schemes which are easier to understand and for which we obtained stronger results.

\subsection{Simple Schemes}
\label{sec:simple}
We first consider a special, yet natural, class of schemes. 
We call a two-prover commitment scheme $\Com$ {\em simple} if it has the same communication pattern as the scheme described in the introduction. More formally, it is called simple if $a',x'$ 
and $y$ are ``empty'' (or fixed), i.e., if $\Com$ is given by $p(a)$, 
$p_0(x,y'|a)$, $p_1(x,y'|a)$ and $\Acc(x,y'|a,b)$; to simplify notation, 
we then write $y$ instead of~$y'$. In other words, $\hP$ is only 
involved in the commit phase, where, in order to commit to bit $b$, he 
outputs $x$ upon input $a$, and $\hQ$ is only involved in the opening 
phase, where he outputs $y$. The non-signaling requirement for $\Com$ then simplifies to $p_b(y|a) = p_b(y)$. 
Recall that by our convention, we may write $p(x_0,y_0|a)$ instead of $p_0(x,y|a)$ and $p(x_1,y_1|a)$ instead of $p_1(x,y|a)$.

In case of such a simple two-prover commitment scheme $\Com$, a 
non-signaling two-prover strategy reduces to a non-signaling {\em 
one-round} bipartite system as specified in 
Definition~\ref{def:ns} (see Figure~\ref{fig:Simple}).

\begin{figure}[h]
  \centering
  \scalebox{0.8}{ \input{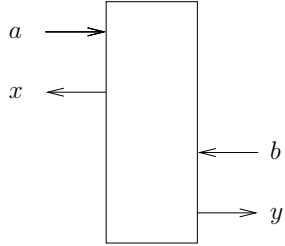} }
  \caption{The adversaries' strategy $q(x,y|a,b)$ in case of a {\em simple} commitment scheme. }
  \label{fig:Simple}
\end{figure}


As a warm-up exercise, we first consider a simple two-prover 
commitment scheme that is {\em perfectly hiding} and {\em perfectly sound}. Recall that formally, a 
simple scheme is given by $p(a)$, $p_0(x,y|b)$, $p_1(x,y|a)$ and 
$\Acc(x,y|a,b)$, and the perfect hiding property means that $p_0(x|a) = 
p_1(x|a)$ for any $a$. To show that such a scheme cannot be binding, we 
have to show that there exists a non-signaling one-round bipartite 
system $q(x,y|a,b)$ such that $\dProb_{q}[\Acc|0] + 
\dProb_{q}[\Acc|1]$ is significantly larger than $1$. But this is 
actually trivial: we can simply set $q(x,y|a,b) := p_b(x,y|a)$. It then 
holds trivially that 
\begin{align*}
\dProb_{q}[\Acc|b] &= \sum_{a,x,y} p(a) \, q(x,y|a,b) \,\Acc(x,y|a,b)\\
  &= \sum_{a,x,y} p(a) \,p_b(x,y|a)\, \Acc(x,y|a,b)\\
  &= \Prob_{p}[\Acc|b]
\end{align*}
and thus 
that the dishonest provers are as successful in opening the commitment 
as are the honest provers in opening an honestly prepared commitment. Thus, 
the binding property is broken as badly as it can get. The only thing 
that needs to be verified is that $q(x,y|a,b)$ is non-signaling, i.e., 
that $q(x|a,b) = q(x|a)$ and $q(y|a,b) = q(y|b)$. To see that the latter holds,
note that $q(y|a,b) = p_b(y|a)$, and because $\Com$ is non-signaling we have that $p_b(y|a) = p_b(y)$, i.e., does not depend on $a$. Thus, the same holds for $q(y|a,b)$ and we have $q(y|a,b) = q(y|b)$.
The former condition follows
from the (perfect) hiding property: $q(x|a,b) = p_b(x|a) = p_{b'}(x|a) = 
q(x|a,b')$ for arbitrary $b,b' \in \set{0,1}$, and thus $q(x|a,b) = 
q(x|a)$. 

Below, we show how to extend this result to non-perfectly-binding simple 
schemes. In this case, we cannot simply set $q(x,y|a,b) := p_b(x,y|a)$, 
because such a $q$ would not be non-signaling anymore\,---\,it would 
merely be ``almost non-signaling''. Instead, we have to find a strategy 
$q(x,y|a,b)$ that is (perfectly) non-signaling and close to 
$p_b(x,y|a)$; we will find such a strategy with the help of
Lemma~\ref{lemma:gluing}. 
In Section~\ref{sec:arbitrary}, we will then consider general schemes
where {\em both} provers interact with the verifier in {\em both} phases. In
this general case, further complications arise.


\begin{theorem}
\label{thm:tight}
Consider a simple two-prover commitment scheme $\Com$ 
that is $\eps$-hiding. Then, there exists
a non-signaling strategy $q(x,y|a,b)$ such that
$$
  \Prob _q^* [\Acc|0] = \Prob _p [\Acc |0] \quad\text{and}\quad
  \Prob _q^* [\Acc|1] \geq \Prob _p [\Acc |1] - \eps \, .
$$
If $\Com$ is perfectly sound, it follows that
$$\Prob _q^*[\Acc|0]+\Prob _q^*[\Acc|1] \geq 1 +(1-\eps)$$
and thus it cannot be $\delta$-binding for $\delta < 1-\eps$.
\end{theorem}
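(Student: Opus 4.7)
My plan is to extend the warm-up argument (the perfectly hiding case) to the $\eps$-hiding setting by using Lemma~\ref{lemma:gluing} to repair the fact that, when hiding is not perfect, setting $q(x,y|a,b) := p_b(x,y|a)$ would only give an almost-non-signaling strategy. The idea is to construct, for each fixed $a$, a \emph{joint} distribution of $(x_0,y_0)$ and $(x_1,y_1)$ in which the two $x$-coordinates agree except with probability $\eps$, and then to let the adversary use a single ``$x$-coordinate'' common to both worlds, while the $y$-coordinate is picked from the appropriate world depending on $b$.

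Concretely, for every $a$ the $\eps$-hiding property gives $d(p(x_0|a),p(x_1|a)) \leq \eps$, so I would apply Lemma~\ref{lemma:gluing} to $p(x_0,y_0|a)$ and $p(x_1,y_1|a)$ (gluing along $x_0$ and $x_1$) to obtain $p'(x_0,x_1,y_0,y_1|a)$ whose $(x_0,y_0)$- and $(x_1,y_1)$-marginals are the originals and that satisfies $d(p'(x_0,y_1|a),p'(x_1,y_1|a)) \leq \eps$. I would then define the strategy by
\[
q(x,y|a,0) := p'(x_0\=x,y_0\=y|a) \quad\text{and}\quad q(x,y|a,1) := p'(x_0\=x,y_1\=y|a).
\]
Thus the prover always reports the ``$x_0$-coordinate'' as $x$, so the $x$-marginal is the same regardless of $b$; in contrast, the $y$ he sends is the $y$-coordinate of the ``right'' world.

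Next I would verify that $q$ is non-signaling. The (NS) direction $q(x|a,b) = q(x|a)$ is immediate because marginalising both $q(x,y|a,0)$ and $q(x,y|a,1)$ over $y$ yields the common $x_0$-marginal of $p'$. The (NS$'$) direction $q(y|a,b) = q(y|b)$ uses that $q(y|a,b)$ is the $y_b$-marginal of $p'(\cdot|a)$, which by the marginal preservation of Lemma~\ref{lemma:gluing} equals $p_b(y|a)$, and this does not depend on $a$ because the honest scheme $\Com$ is (by assumption) non-signaling, hence $p_b(y|a) = p_b(y)$. This is where I expect the main subtlety to sit: it is crucial that in the definition of $q$ the same ``$x_0$-coordinate'' is used for both values of $b$, for otherwise the (NS) condition fails; and conversely, it is the hiding property that allowed gluing along $x_0, x_1$ with small mismatch, so the two sides of the non-signaling requirement really are balanced against hiding.

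Finally I would compute the success probabilities. For $b=0$, by construction $q(x,y|a,0) = p(x_0,y_0|a) = p_0(x,y|a)$, hence $\dProb_q[\Acc|0] = \Prob_p[\Acc|0]$ exactly. For $b=1$, $\dProb_q[\Acc|1]$ is obtained by integrating $\Acc(\cdot\,|a,1)$ against $p'(x_0,y_1|a)$, while $\Prob_p[\Acc|1]$ does the same against $p(x_1,y_1|a) = p'(x_1,y_1|a)$; since $\Acc \in \set{0,1}$ and the two distributions are $\eps$-close in statistical distance uniformly in $a$, the two quantities differ by at most $\eps$, giving $\dProb_q[\Acc|1] \geq \Prob_p[\Acc|1] - \eps$. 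The binding consequence then follows by plugging $\Prob_p[\Acc|0] = \Prob_p[\Acc|1] = 1$ into Definition~\ref{def:binding}.
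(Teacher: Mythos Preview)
Your proposal is correct and follows essentially the same approach as the paper: glue $p(x_0,y_0|a)$ and $p(x_1,y_1|a)$ along $x_0,x_1$ using Lemma~\ref{lemma:gluing}, set $q(x,y|a,b):=p'(x_0,y_b|a)$, verify (NS) via the common $x_0$-marginal and (NS$'$) via the non-signaling of $\Com$, and bound $\dProb_q[\Acc|1]$ by the $\eps$-closeness of $p'(x_0,y_1|a)$ and $p'(x_1,y_1|a)$. The only cosmetic difference is that the paper drops the prime on $p'$ (per its remark on consistent marginals) and phrases the final step as ``statistical distance does not increase under data processing'' rather than ``$\Acc\in\set{0,1}$''.
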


\begin{proof}
Recall that $\Com$ is given by $p(a)$, $p_b(x,y|a)$ and $\Acc (x,y|a,b)$, and we write $p(x_b,y_b|a)$ instead of $p_b(x,y|a)$. 
Because $\Com$ is $\eps$-hiding, it holds that $d\bigl(p(x_0|a),p(x_1|a)\bigr) \leq \eps$ for
any fixed $a$. Thus, using Lemma~\ref{lemma:gluing} for every $a$, we can glue
together $p(x_0,y_0|a)$ and $p(x_1,y_1|a)$ along $x_0$ and $x_1$ to obtain a
distribution $p(x_0,x_1,y_0,y_1|a)$ such that $p(x_0\neq x_1|a)\leq\eps$, and
in particular $d\bigl(p(x_0,y_1|a), p(x_1,y_1|a)\bigr)\leq \eps$.

We define a strategy $q$ for the dishonest provers by setting
$q(x,y|a,b) := p(x_0,y_b|a)$ (see Figure \ref{fig:0and1}).
First, we show that $q$ is non-signaling. Indeed, we have
$q(x|a,b) = p(x_0|a)$ for any $b$, so $q(x|a,b) = q(x|a)$, and we have $q(y|a,b) = p(y_b|a) = p(y_b)$ for any $a$, and thus $q(y|a,b) = q(y|b)$. 

As for the acceptance probability, for $b = 0$ we have $q(x,y|a,0) = p(x_0, y_0|a)$ and as such \smash{$\dProb_q [\Acc |0]$} equals $\Prob _p [\Acc | 0]$. 
For $b=1$, we have
$$d\bigl(q(x,y|a,1),p(x_1,y_1|a)\bigr) = d\bigl(p(x_0,y_1|a),p(x_1,y_1|a)\bigr)\leq \eps$$
and since the statistical distance does not increase under data processing,
it follows that
$\Prob _p[\Acc |1]$ and $\Prob _q^*[\Acc |1]$ are $\eps$-close; this proves the claim.
\end{proof}

\begin{figure}[h]
  \centering
  \scalebox{0.8}{ \input{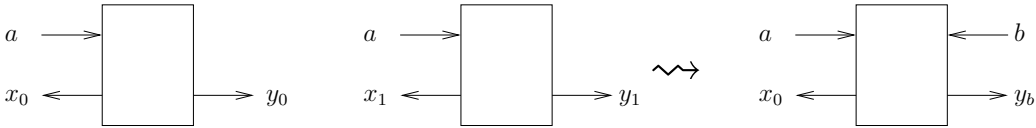} }
  \caption{Defining the strategy $q$ by gluing together $p(x_0,y_0|a)$ and
  $p(x_1,y_1|a)$.}
  \label{fig:0and1}
\end{figure}

The bound on the binding property in Theorem~\ref{thm:tight} is tight, as the following theorem shows.
\begin{theorem}
For all $\eps \in \Q$ such that $0 < \eps\leq 1$ there exists a classical simple two-prover commitment scheme that is perfectly sound, $\eps$-hiding and $(1-\eps)$-binding against non-signaling adversaries.
\end{theorem}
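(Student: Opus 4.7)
The plan is to construct $\Com_\eps$ by convexly combining the BGKW scheme from the introduction (perfectly sound, perfectly hiding, but fully broken against non-signaling) with a trivial reveal sub-protocol in which $\hP$ commits by sending $x=b$ directly. The reveal protocol is completely non-hiding ($d=1$) but crucially is perfectly binding against non-signaling adversaries: since $q(x|a,b)=q(x|a)$, one immediately has
\[
\dProb_q[\Acc|0]+\dProb_q[\Acc|1]=\sum_a p(a)\bigl(q(x\!=\!0|a)+q(x\!=\!1|a)\bigr)=1.
\]
Since $\eps=m/N$ is rational, the $\eps$-weighted mixture of these two schemes can be realised entirely classically with shared randomness over finite sets.

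Concretely, I would let $\hP$ and $\hQ$ share a bit $c$ with $\Pr[c\!=\!1]=\eps$, executing the reveal sub-scheme when $c=1$ and BGKW when $c=0$; the commit $x$ carries a tag identifying the mode so that the verifier knows which predicate to apply. Perfect soundness and classicality are immediate from the two sub-schemes. A direct total-variation computation shows that per-$a$ hiding is exactly $\eps$: the BGKW-tagged marginal of $p_b(\cdot|a)$ is the same for $b=0$ and $b=1$ (contributing $0$), while the reveal-tagged atoms sit on the disjoint values $x\!=\!b$ and each carry mass $\eps$ in a single one of $p_0(\cdot|a),p_1(\cdot|a)$.

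The main step is the binding bound $\dProb_q[\Acc|0]+\dProb_q[\Acc|1]\leq 2-\eps$ for every non-signaling $q(x,y|a,b)$. I would split the sum by the mode-tag in $x$: by the display above, the reveal-tagged contribution is at most the weight $\pi_R(a)$ of reveal-tagged $x$'s under $q(\cdot|a)$, while the BGKW-tagged contribution is at most twice its weight $\pi_B(a)=1-\pi_R(a)$, realised by the standard PR-box/NL-box strategy. Naively this yields $2\pi_B+\pi_R\leq 2$, which is not tight enough. The trick to pushing the bound down to $2-\eps$ is to force the adversary's effective mode-weights on accepted transcripts to equal the honest $(\pi_B,\pi_R)=(1-\eps,\eps)$; I would do this by having the verifier's predicate cross-check the mode-tag in $x$ against a mode-indicator that $\hQ$ must include in the opening $y$, and then invoke the non-signaling constraint $q(y|a,b)=q(y|b)$ (so that $\hQ$'s indicator cannot depend on $a$) together with $q(x|a,b)=q(x|a)$ to pin down the joint mode-weights on accepted transcripts to those of the honest scheme.

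Combining the two contributions with the pinned-down weights yields $\dProb[\Acc|0]+\dProb[\Acc|1]\leq 2(1-\eps)+1\cdot\eps=2-\eps$, i.e.\ $(1-\eps)$-binding, matching the bound from Theorem~\ref{thm:tight} and establishing tightness. The hard part is precisely the mode-consistency argument in the previous paragraph: without it, a non-signaling adversary can always tag its commit as BGKW-mode and achieve $2$, so the entire proof hinges on showing that the non-signaling marginals on $x$ and on $y$ together force the two mode-weights to be the same, not under the adversary's control.
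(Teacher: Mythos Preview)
Your construction has a genuine gap, and it is exactly the ``mode-consistency argument'' that you yourself flag as the hard part. The non-signaling constraints $q(x|a,b)=q(x|a)$ and $q(y|a,b)=q(y|b)$ only say that the mode-tag in $x$ cannot depend on $b$ and the mode-indicator in $y$ cannot depend on $a$; they do \emph{not} pin down these marginals to any particular values, let alone to the honest weights $(1-\eps,\eps)$. Concretely, take the adversary that deterministically sets the mode-tag in $x$ to ``BGKW'' and the mode-indicator in $y$ to ``BGKW'' as well. Both marginals are trivially non-signaling (they are constant), the verifier's cross-check passes, and on the BGKW part the PR-box strategy gives $\dProb_q[\Acc|0]+\dProb_q[\Acc|1]\to 2$. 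So your scheme as described is not $(1-\eps)$-binding for any $\eps<1$; the mixture with the reveal sub-scheme is simply ignored by the adversary.

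The paper's construction avoids this by tying the ``mode'' not to a prover-chosen bit $c$ but to the relation between the verifier's random challenge $a$ and the opener's output $y$. Specifically (for $\eps=1/n$), $V$ samples $a\in[n]$ uniformly and accepts iff $x=b$ \emph{or} $y\neq a$. Now the key point: since $q(y|a,b)=q(y|b)$ and $a$ is uniform and chosen by the verifier, one has $q(y\neq a\mid b)=1-1/n=1-\eps$ \emph{for every non-signaling strategy}. The adversary cannot inflate the weight of the ``easy'' acceptance clause $y\neq a$ because $y$ is blind to $a$. The binding computation then reads
\[
\dProb_q[\Acc|0]+\dProb_q[\Acc|1]\le q(x{=}0)+q(x{=}1)+q(x{=}\bot)+q(y\neq a)=1+(1-\eps),
\]
using $q(x|b{=}0)=q(x|b{=}1)$. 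The idea you are missing is that the mode split must be enforced by the verifier's private randomness interacting with the opener's (necessarily $a$-independent) message, not by a tag the provers control.
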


\begin{proof}
We construct a scheme where the first prover reveals the bit $b$
right at the beginning with probability $\eps$. 
For simplicity, we first assume that
$\eps = 1/n$ for some integer $n\geq 1$ and then indicate how to
extend the proof to arbitrary rational numbers.

The scheme works as follows. Let $[n] = \{ 0,\dots , n-1\}$. The shared
randomness of the provers is $r\in [n]$ selected uniformly at random.
The verifier selects $a\in [n]$ uniformly at random and sends it to prover $P$.
If $a = r$ then $P$ reveals $x := b$ to the verifier. Otherwise,
he sends back $x:= \bot$. In the opening phase, $Q$ sends $r$ to the
verifier. The verifier accepts if and only if $P$ revealed $b$ or the
output $y$ of $Q$ satisfies $y\in [n]$ and $y\neq a$.

It is clear that this scheme is sound and $\eps$-hiding. Now consider
dishonest provers that follow some non-signaling strategy $q(x,y|a,b)$. This then defines $q(a,x,y|b) = p(a) \, q(x,y|a,b)$ with $p(a) = 1/n$, and it holds that $\Prob _q^*[\Acc | b] = q(x\!=\!b|b) + q(x\!=\!\bot , y\neq a|b)$.
Since $q(y|a,b) = q(y|b)$, 
we have 
$$
q(y\!\neq\! a|b) = \sum_{a,y \atop a \neq y} q(a,y|b) = \sum_{a,y \atop a \neq y} p(a) \, q(y|b) = \sum_y \frac{n-1}{n} q(y|b) = 1-\eps \, .
$$
Therefore, using that $q(x|a,b) = q(x|a)$ and hence $q(x|b\!=\!0) = q(x|b\!=\!1)$, we calculate
\begin{align*}
  \Prob_q^*&[\Acc|0] + \Prob_q^*[\Acc|1]\\
  =\ & q(x\!=\!0|b\!=\!0)+q(x\!=\!\bot,y\neq a|b\!=\!0)+q(x\!=\!1|b\!=\!1)
    +q(x\!=\!\bot,y\neq a|b\!=\!1)\\
  \leq\ & q(x\!=\!0|b\!=\!0)+q(x\!=\!1|b\!=\!0)+q(x\!=\!\bot|b\!=\!0)+q(y\!\neq\! a|b\!=\!1) \\
  =\ & 1 + (1-\eps) \, .
\end{align*}
We now adapt this argument to $\eps = m/n$, where $m$ and $n$ are
integers such that $0 < m \leq n$. For every $a\in [n]$, we define
a subset $S_a$ of $[n]$ as
$$S_a = \{ a+i\bmod n \mid i\in \{ 0,\dots ,m-1\}\}\text .$$
We adapt our scheme by replacing the condition $r=a$ with $r\in S_a$.
Clearly, the scheme is still sound. Since every $S_a$ has exactly $m$ elements,
the scheme is $\eps$-hiding: the probability that the first prover
reveals $b$ is $m/n = \eps$; otherwise, he does not give any information
about $b$. The proof that the scheme is $(1-\eps)$-binding goes
through as before if we can show that $q(y\not\in S_a | a,b) = 1-\eps$
for any non-signaling strategy $q$. Indeed, for every $y\in [n]$, there
are exactly $m$ values for $a$ such that $y\in S_a$. Since $a\in [n]$ is
selected randomly and $q(y|a,b)$ is independent of $a$, we have
$q(y\not\in S_a|a,b) = 1 - m/n = 1-\eps$.
\end{proof}

\subsection{Arbitrary Schemes}
\label{sec:arbitrary}

We now remove the restriction on the scheme to be simple. As before, we first consider the case of a perfectly hiding scheme. 

\begin{theorem}\label{thm:impossibility}
Let $\Com$ be a single-round two-prover commitment scheme. 
If $\Com$ is perfectly hiding, then there exists a non-signaling two-prover strategy  $q(x,x',y,y'|a,a',b,b')$ such that 
$$
\dProb_{q}[\Acc|b] = \Prob_p[\Acc|b]
$$
for $b \in \set{0,1}$. 
\end{theorem}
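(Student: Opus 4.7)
The plan is to extend the simple-case construction of Theorem~\ref{thm:tight} to the bipartite setting by gluing $p_0$ and $p_1$ along the \emph{pair} of commit answers. Concretely, I will apply Lemma~\ref{lemma:gluing} for each fixed $(a,a')$, letting the combined variable $(x,x')$ play the role of ``$x_0,x_1$'' and the combined variable $(y,y')$ play the role of ``$y_0,y_1$''. Since $\Com$ is perfectly hiding, $d\bigl(p(x_0,x'_0|a,a'),p(x_1,x'_1|a,a')\bigr)=0$, so the gluing produces a joint distribution
\[
p(x,x',y_0,y'_0,y_1,y'_1 \,|\, a,a')
\]
whose marginals recover the original $p_b(x,x',y,y'|a,a')$ in the form $p(x,x',y_b,y'_b|a,a')$, and in which the two branches' commit answers coincide with probability $1$ (so I identify them and write simply $(x,x')$).

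The dishonest strategy will then be defined by letting each prover pull its opening answer from its ``own'' branch of the gluing:
\[
q(x,x',y,y' \,|\, a,a',b,b') \;:=\; p(x,x',y_b,y'_{b'} \,|\, a,a').
\]
The success probability is then immediate: specializing $b'=b$ gives $q(x,x',y,y'|a,a',b,b) = p(x,x',y_b,y'_b|a,a') = p_b(x,x',y,y'|a,a')$ by the marginal property in Lemma~\ref{lemma:gluing}, and summing against $p(a,a')$ and $\Acc$ yields $\dProb_q[\Acc|b] = \Prob_p[\Acc|b]$ for $b\in\{0,1\}$.

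The bulk of the work is to verify that $q$ is a non-signaling two-round bipartite system in the sense of Definition~\ref{def:ns2}. The marginal of $q$ on $(x,x')$ is the common commit marginal, depending only on $(a,a')$, which gives (C1). The marginal on $(x,x',y)$ equals $p(x,x',y_b|a,a')$, which depends on $b$ but not on $b'$, so that (C2) and (NS2) hold; the primed versions (C2$'$) and (NS2$'$) follow by the symmetric computation. The one condition that does not come ``for free'' from the construction is (NS1): I need $q(x,y|a,a',b,b')=\sum_{x',y'} p(x,x',y_b,y'_{b'}|a,a') = p_b(x,y|a,a')$ to be independent of $a'$, and for this I will invoke the standing non-signaling of $\Com$ itself (Definition~\ref{def:Com}): since $p_b$ is a non-signaling one-round bipartite system $p_b((x,y),(x',y')|a,a')$, one has $p_b(x,y|a,a')=p_b(x,y|a)$, giving (NS1); (NS1$'$) follows analogously using the right-marginal. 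This single invocation of the one-round non-signaling of $\Com$ is the main subtlety of the argument\,---\,every other non-signaling constraint of the two-round strategy is forced automatically by the fact that, in my construction, $b$ influences only the left opening $y$ and $b'$ only the right opening $y'$.
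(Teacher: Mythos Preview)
Your proposal is correct and follows essentially the same approach as the paper: glue $p_0$ and $p_1$ along the pair $(x,x')$ using Lemma~\ref{lemma:gluing}, identify the commit answers (which coincide with probability~$1$ by perfect hiding), set $q(x,x',y,y'|a,a',b,b') := p(x,x',y_b,y'_{b'}|a,a')$, and verify (C1)--(NS2) exactly as you do, with (NS1) being the one place where the non-signaling assumption on $\Com$ is invoked. The paper's verification of (NS1) is phrased as $q(x,y|a,a',b,b') = p(x_0,y_b|a,a') = p(x_b,y_b|a,a') = p(x_b,y_b|a)$, which is the same computation you give after unwinding the identification $x_0=x_1=x$.
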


\begin{proof}
$\Com$ being perfectly hiding means that
$d(p(x_0,x_0'|a,a'),p(x_1,x_1'|a,a'))=0$ for all $a$ and $a'$.
Gluing together the distributions $p(x_0,x_0',y_0,y'_0|a,a')$ and $p(x_1,x_1',y_1,y'_1|a,a')$ along $(x_0,x_0')$ and $(x_1,x_1')$ for every $(a,a')$,
we obtain a distribution $p(x_0,x_0',x_1,x_1',y_0,y_0',y_1,y_1'|a,a')$ with the correct
marginals and $p((x_0,x_0')\neq (x_1,x_1')|a,a') = 0$. That is, we have
$x_0 = x_1$ and $x_0' = x_1'$ with certainty. We now define a strategy for dishonest provers
as (Figure~\ref{fig:0and1_gen})
$$q(x,x',y,y'|a,a',b,b') := p(x_0,x_0',y_b,y_{b'}'|a,a') \, .
$$
Since $p(x_0,x_0',y_b,y_b'|a,a') = p(x_b,x_b',y_b,y_b'|a,a')$,
it holds that $\Prob _q^* [ \Acc | b] = \Prob _p [ \Acc | b]$.
It remains to show that this distribution satisfies the non-signaling and
causality constraints (C1) up to (NS2) of Definition~\ref{def:ns2}.
This is done below. 

\begin{itemize}\setlength{\parskip}{1.5ex}
\item For (C1), note that summing up over $y$ and $y'$ yields
  $q(x,x'|a,a',b,b') = p(x_0,x_0'|a,a')$, which indeed does not depend on
  $b$ and $b'$. 
\item For (NS1), note that
  $q(x,y|a,a',b,b') = p(x_0,y_b|a,a') = p(x_b,y_b|a,a') = p(x_b,y_b|a)$, where the last equality holds by the non-signaling property of $p(x_b,y_b|a,a')$.
\item For (C2), first note that
  \begin{equation}\label{eq:proof}
    q(x,x',y|a,a',b,b') = p(x_0,x_0',y_b|a,a')
  \end{equation}
  which does not depend on $b'$. We then see that (C2) holds by dividing by
  $q(x,y|a,a',b,b') = p(x_0,y_b|a,a')$. 
\item For (NS2), divide Equation \eqref{eq:proof} by
  $q(x,x'|a,a',b,b') = p(x_0,x_0'|a,a')$
\end{itemize}
The properties (C1) to (NS2) with the roles of the primed and unprimed
variables exchanged follows from symmetry. This concludes the proof. 
\end{proof}

\begin{figure}[h]
  \centering
  \scalebox{0.8}{ \input{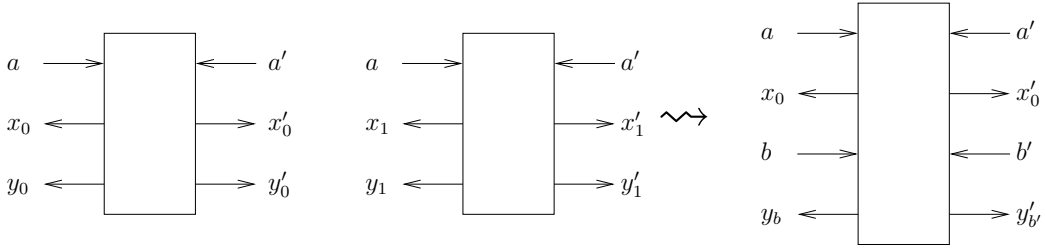} }
  \caption{Defining the strategy $q$ from $p(x_0,x_0',y_0,y_0'|a,a')$ and $p(x_1,x_1',y_1,y_1'|a,a')$ glued together. }
  \label{fig:0and1_gen}
\end{figure}

The case of non-perfectly hiding schemes is more involved. At first glance, one might expect that by proceeding analogously to the proof of
Theorem~\ref{thm:impossibility} --- i.e., gluing together $p(x_0,x_0',y_0,y'_0|a,a')$ and $p(x_1,x_1',y_1,y'_1|a,a')$ along $(x_0,x_0')$ and $(x_1,x_1')$ and defining $q$ the same way --- one can obtain a strategy $q$ that succeeds
with probability $1-\eps$ if the scheme is $\eps$-hiding. Unfortunately,
this approach fails because in order to show (NS1) we use that
$p(x_0,y_1|a,a') = p(x_1,y_1|a,a')$ which in general does not hold for
commitment schemes that are not perfectly hiding. 
As a consequence, our proof is more involved, and we have a constant-factor loss in the parameter.

\begin{theorem}
\label{thm:imperfect}
Let $\Com$ be a single-round two-prover commitment scheme
and suppose that it is $\eps$-hiding. Then there exists a non-signaling
two-prover strategy $q(x,x',y,y'|a,a',b,b')$ such that
$$
  \dProb _q [\Acc | 0] = \Prob _p [\Acc | 0] \quad\text{and}\quad
  \dProb _q [\Acc | 1] \geq \Prob _p [\Acc | 1] - 5\eps \, .
$$
Thus, if $\Com$ is perfectly sound, it is at best $(1-5\eps)$-binding.
\end{theorem}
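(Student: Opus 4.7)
The plan is to extend the construction from Theorem~\ref{thm:impossibility} to the $\eps$-hiding case by invoking Lemma~\ref{lemma:gluing} and then modifying the naive candidate strategy so that it becomes truly non-signaling, at the cost of a constant-factor loss.

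First, as in Theorem~\ref{thm:impossibility}, I would apply Lemma~\ref{lemma:gluing} for each $(a,a')$ to glue $p(x_0,x_0',y_0,y_0'|a,a')$ and $p(x_1,x_1',y_1,y_1'|a,a')$ along $(x_0,x_0')$ and $(x_1,x_1')$, obtaining a joint $p(x_0,x_0',x_1,x_1',y_0,y_0',y_1,y_1'|a,a')$ with $p((x_0,x_0')\neq(x_1,x_1')|a,a')\leq\eps$. The direct analogue of the perfect-hiding strategy, $q_{\text{naive}}(x,x',y,y'|a,a',b,b'):=p(x_0,x_0',y_b,y_{b'}'|a,a')$, matches $p_0$ exactly when $b=b'=0$ but, as the authors explain, fails (NS1) for $b=1$: the marginal $p(x_0,y_1|a,a')$ can depend on $a'$. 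A short calculation using the gluing shows $d\bigl(p(x_0,y_1|a,a'),p(x_1,y_1|a)\bigr)\leq\eps$ (with $p(x_1,y_1|a,a')=p(x_1,y_1|a)$ by honest non-signaling), so $q_{\text{naive}}$ is only $O(\eps)$-signaling.

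Second, I would define a truly non-signaling $q$ by keeping the commit $q(x,x'|a,a'):=p(x_0,x_0'|a,a')$ (which enforces (C1) and exactly matches $p_0$ for $b=b'=0$) and replacing the problematic opening conditionals. For $b=1$, I would set $q(y|x,x',a,a',1)$ to depend only on $(x,a)$, obtained from an $a$-level gluing of $p(x_0,y_0|a)$ and $p(x_1,y_1|a)$ along $x_0,x_1$; this enforces (NS1) exactly by construction. I would do the symmetric replacement for $b'=1$. For $(b,b')=(1,1)$, I would then define the joint opening as a coupling of these two prescribed side-marginals, chosen via the full glued distribution so as to preserve the honest correlation between $y_1$ and $y_1'$ as much as possible. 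Non-signaling ((C1), (C2), (NS1), (NS2), and primed versions) follows by direct computation from the factorized structure of the conditional marginals.

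Third, I would bound the acceptance loss by triangle inequalities. For $b=0$, $\dProb_q[\Acc|0]=\Prob_p[\Acc|0]$ holds by construction. For $b=1$, I would decompose $d\bigl(q(\cdot|a,a',1,1),p_1(\cdot|a,a')\bigr)$ as a sum of five $\eps$-sized contributions: $(i)$ the commit discrepancy $d\bigl(p(x_0,x_0'|a,a'),p(x_1,x_1'|a,a')\bigr)\leq\eps$ from $\eps$-hiding; $(ii)$ and $(iii)$ the left and right opening-marginal replacements via the $a$- and $a'$-level gluings, each $\eps$; $(iv)$ the joint-opening coupling approximation; and $(v)$ the gluing slack $p((x_0,x_0')\neq(x_1,x_1')|a,a')\leq\eps$. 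Averaging over $(a,a')$ gives $\dProb_q[\Acc|1]\geq\Prob_p[\Acc|1]-5\eps$.

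The hard part will be defining the joint opening for $(b,b')=(1,1)$: the non-signaling marginal constraints force specific factored marginals, which can conflict with preserving the honest correlations between $y_1$ and $y_1'$; a scheme where the honest opening satisfies $y_1=y_1'$ shows that a naive product coupling loses a constant fraction rather than $O(\eps)$. A carefully chosen non-product coupling routed through the glued distribution is therefore essential, and the precise bookkeeping that extracts the $5\eps$ constant from all five error sources is the main technical work.
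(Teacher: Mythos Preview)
Your high-level plan—build a non-signaling $q$ that agrees with $p_0$ on $b=0$ and is $O(\eps)$-close to $p_1$ on $b=1$—is the same as the paper's, but your execution differs from the paper's and leaves the decisive step open. The paper does \emph{not} try to repair the naive strategy by redefining opening conditionals and then searching for a joint coupling. Instead it reduces to the perfectly hiding case (Theorem~\ref{thm:impossibility}) by modifying the honest $b=1$ distribution in two stages: Lemma~\ref{lemma:imperfect2} first tweaks $p_1$ to a non-signaling $\tilde p$ that is $2\eps$-close and satisfies the \emph{individual} marginal equalities $\tilde p(x_1|a,a')=p(x_0|a,a')$ and $\tilde p(x_1'|a,a')=p(x_0'|a,a')$; then Lemma~\ref{lemma:imperfect1} (which \emph{needs} those equalities) tweaks $\tilde p$ to a non-signaling $p'$ that is $3\eps$-close to $\tilde p$ and has $p'(x_1,x_1'|a,a')=p(x_0,x_0'|a,a')$. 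The resulting scheme is perfectly hiding, so Theorem~\ref{thm:impossibility} applies directly, and the $5\eps$ is just $2\eps+3\eps$ from the two triangle inequalities.

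The gap in your proposal is precisely the part you flag as ``the main technical work'': you assert that a non-product coupling with the prescribed side-marginals and $O(\eps)$ distance to $p_1$ exists, but give no construction and no argument. The paper's proof of Lemma~\ref{lemma:imperfect1} shows why this is delicate: the correction term $r(y_1|x_0,a,a')$ that restores non-signaling is only a valid probability distribution because $p(x_0|a,a')=p(x_1|a,a')$, which is exactly what Lemma~\ref{lemma:imperfect2} is there to arrange first. Without that preliminary marginal-matching step, a direct one-shot coupling of the kind you sketch need not exist. Relatedly, your requirement that $q(y|x,x',a,a',1)$ depend only on $(x,a)$ is strictly stronger than what (NS1) demands (only the $x'$-marginal $q(x,y|a,a',1,b')$ must be independent of $a',b'$), and this over-constraint is part of what makes your coupling hard to find. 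Finally, your ``five $\eps$ contributions'' do not line up with a clean bound: items (i) and (v) overlap, and item (iv) is never bounded; the paper's $5\eps$ arises from a different and simpler decomposition ($2\eps$ from Lemma~\ref{lemma:imperfect2} plus $3\eps$ from Lemma~\ref{lemma:imperfect1} applied to a $3\eps$-hiding scheme).
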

To prove this result, we use two lemmas. In the first one, we add the
additional assumptions that $p(x_0|a,a') = p(x_1|a,a')$ and $p(x_0'|a,a')
= p(x_1'|a,a')$. The second one shows that we can tweak an arbitrary scheme
in such a way that these additional conditions hold. The proofs are given
in Appendix \ref{app:proof}.

\begin{lemma}
\label{lemma:imperfect1}
Let $\Com$ be a $\eps$-hiding two-prover commitment scheme with the
additional property that $p(x_0|a,a') = p(x_1|a,a')$ and
$p(x_0'|a,a') = p(x_1'|a,a')$. Then, there is a  non-signaling  $p'(x_1,x_1',y_1,y_1'|a,a')$ such that
$$d\bigl(p'(x_1,x_1',y_1,y_1'|a,a'),p(x_1,x_1',y_1,y_1'|a,a')\bigr)\leq\eps$$
and $p'(x_1,x_1'|a,a') = p(x_0,x_0'|a,a')$.
\end{lemma}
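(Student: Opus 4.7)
The natural approach is to glue $p(x_0,x_0'|a,a')$ with $p(x_1,x_1',y_1,y_1'|a,a')$ via Lemma~\ref{lemma:gluing}, applied separately for each pair $(a,a')$, along the pairs $(x_0,x_0')$ and $(x_1,x_1')$. Since $\Com$ is $\eps$-hiding we have $d\bigl(p(x_0,x_0'|a,a'),p(x_1,x_1'|a,a')\bigr)\leq\eps$, so the resulting joint $p(x_0,x_0',x_1,x_1',y_1,y_1'|a,a')$ preserves both original marginals and satisfies $p\bigl((x_0,x_0')\neq(x_1,x_1')\,|\,a,a'\bigr)\leq\eps$. I would then define the candidate
$$
p'(x_1,x_1',y_1,y_1'|a,a') := p(x_0\!=\!x_1,\,x_0'\!=\!x_1',\,y_1,y_1'|a,a'),
$$
i.e., the marginal of the glued distribution on $(x_0,x_0',y_1,y_1')$ with $(x_0,x_0')$ relabelled as $(x_1,x_1')$.

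By construction the marginal condition $p'(x_1,x_1'|a,a')=p(x_0,x_0'|a,a')$ is automatic. For the statistical-distance bound, observe that on the event $E:(x_0,x_0')=(x_1,x_1')$, which has probability at least $1-\eps$ under the glued distribution, the joints $p(x_0,x_0',y_1,y_1'|a,a')$ and $p(x_1,x_1',y_1,y_1'|a,a')=p(x_1,x_1',y_1,y_1'|a,a')$ agree after the relabelling; decomposing each distribution along $E$ and $\bar E$ yields $d(p',p_1)\leq p(\bar E|a,a')\leq\eps$.

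The main obstacle will be verifying that $p'$ is non-signaling. A generic choice of coupling does not preserve non-signaling: after marginalising, the residual conditional $p(y_1|x_1,x_1',a,a')$ inherited from $p_1$ still depends on $a'$, and this dependence can leak into $p'(x_1,y_1|a,a')$ unless the coupling is chosen carefully. This is exactly where the additional assumption on matching single-variable marginals, $p(x_0|a,a')=p(x_1|a,a')$ and $p(x_0'|a,a')=p(x_1'|a,a')$, is decisive. It allows one to realise the gluing in a way compatible with the non-signaling structure on both sides, for example by coupling $(x_0,x_1)$ deterministically via the diagonal (admissible because the single-variable marginals are equal) and analogously for $(x_0',x_1')$, and extending using the conditional structure of $p_1$. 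With such a choice a direct computation, relying on the non-signaling of $p_0$ (to eliminate $a'$-dependence from the coupling marginal) and of $p_1$ (to reduce $p_1(y_1|x_1,a,a')$ to $p_1(y_1|x_1,a)$), shows that $p'(x_1,y_1|a,a')$ depends only on $a$ and, by the symmetric argument, $p'(x_1',y_1'|a,a')$ depends only on $a'$. This then yields a perfectly non-signaling $p'$ with the required marginal and within $\eps$ of $p_1$, completing the proof.
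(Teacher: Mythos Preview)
Your starting point---glue $p(x_0,x_0'|a,a')$ to $p(x_1,x_1',y_1,y_1'|a,a')$ along the pair $((x_0,x_0'),(x_1,x_1'))$ and set $p'(x_1,x_1',y_1,y_1'|a,a') := p(x_0,x_0',y_1,y_1'|a,a')$---is exactly what the paper also begins with, and your verification of the marginal identity $p'(x_1,x_1'|a,a')=p(x_0,x_0'|a,a')$ and of the $\eps$-bound on the statistical distance is correct. The genuine gap is in the non-signaling step, and your proposed fix does not close it.

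You suggest choosing the coupling so that $x_0=x_1$ and $x_0'=x_1'$ hold deterministically (``diagonal coupling''), arguing this is admissible because the single-variable marginals match. But a coupling of the \emph{joint} laws $p(x_0,x_0'|a,a')$ and $p(x_1,x_1'|a,a')$ with $x_0=x_1$ and $x_0'=x_1'$ almost surely forces these two joint laws to coincide, which they do not unless $\eps=0$. Equality of the one-dimensional marginals does not help here. Any other natural reading of your sketch (e.g.\ couple $x_0=x_1$ diagonally and then resample $x_1'$ from $p_1(\cdot|x_1,a,a')$) either loses the required marginal on $(x_1,x_1')$ or destroys the $\eps$-closeness. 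In short, no choice of the gluing coupling alone makes $p(x_0,y_1|a,a')$ independent of $a'$, because after marginalising out $x_0'$ you are still averaging $p_1(y_1|x_1,x_1',a,a')$---not $p_1(y_1|x_1,a,a')$---against an $a'$-dependent weight.

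The paper's proof does something you do not: it \emph{modifies} the glued distribution on the bad event $\bar\Lambda=\{(x_0,x_0')\neq(x_1,x_1')\}$. On that event it discards the $(y_1,y_1')$ coming from $p_1$ and replaces them by independent samples from auxiliary conditionals $r(y_1|x_0,a,a')$ and $r(y_1'|x_0',a,a')$, where $r$ is chosen so that the resulting marginal $p'(x_1,y_1|a,a')$ is \emph{exactly} $p(x_1,y_1|a,a')=p(x_1,y_1|a)$ (and symmetrically on the primed side). Concretely one sets
\[
r(y_1|x_0,a,a')\;=\;\frac{p(\bar\Lambda,x_1,y_1|a,a')}{p(\bar\Lambda,x_0|a,a')}\,,
\]
and it is precisely the extra hypothesis $p(x_0|a,a')=p(x_1|a,a')$ that guarantees this $r$ is a genuine probability distribution (it makes the numerators sum to the denominator). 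This surgical repair on the $\eps$-probability event is the missing idea; once you have it, the distance bound and the marginal identity survive unchanged, and non-signaling holds by construction.
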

As usual, the non-signaling requirement on $p'(x_1,x_1',y_1,y_1'|a,a')$ is to be  understood as $p'(x_1,y_1|a,a') = p'(x_1,y_1|a)$ and $p'(x_1',y_1'|a,a') = p'(x_1',y_1'|a')$.

\begin{lemma}
\label{lemma:imperfect2}
Let $\Com$ be a $\eps$-hiding two-prover commitment scheme.
Then, there exists a non-signaling 
$\tilde{p}(x_1,x_1',y_1,y_1'|a,a')$ such that
$$d\bigl(\tilde{p}(x_1,x_1',y_1,y_1'|a,a'),p(x_1,x_1',y_1,y_1'|a,a')\bigr) \leq 2\eps$$
which has the property that 
$\tilde{p}(x_1|a,a') = p(x_0|a,a')$ and $\tilde{p}(x_1'|a,a') = p(x_0'|a,a')$.
\end{lemma}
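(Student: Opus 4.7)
The plan is to successively fix the two single-variable marginals $p(x_1|a,a')$ and $p(x_1'|a,a')$ by two applications of Lemma~\ref{lemma:gluing}. Taking marginals in the $\eps$-hiding hypothesis gives $d\bigl(p(x_0|a,a'),p(x_1|a,a')\bigr)\leq \eps$ and, by the non-signaling property of the honest scheme, $p(x_0|a,a')=p(x_0|a)$ and $p(x_1|a,a')=p(x_1|a)$; similarly with primed variables.

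First, for each pair $(a,a')$ I would use Lemma~\ref{lemma:gluing} to glue $p(x_0|a)$ together with $p(x_1,x_1',y_1,y_1'|a,a')$ along $x_0$ and $x_1$, obtaining a joint distribution $p(x_0,x_1,x_1',y_1,y_1'|a,a')$ with $p(x_0\neq x_1|a,a')\leq \eps$ and the original marginals preserved. Because the two glued distributions $p(x_0|a)$ and $p(x_1|a,a')=p(x_1|a)$ do not depend on $a'$, the coupling $p(x_0,x_1|a,a')$ produced by Proposition~\ref{prop:gluing} can be chosen independent of $a'$, i.e.\ equal to some $p(x_0,x_1|a)$. Then I define $\hat{p}(x_1,x_1',y_1,y_1'|a,a')$ to be the marginal of $(x_0,x_1',y_1,y_1')$ of this joint with $x_0$ relabelled as $x_1$. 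By construction $\hat{p}(x_1|a,a')=p(x_0|a)$ and $d\bigl(\hat{p},p\bigr)\leq \eps$. The key calculation is that $\hat{p}$ is still non-signaling: using the explicit factorisation from Lemma~\ref{lemma:gluing}, $\hat{p}(x_1,y_1|a,a')=\sum_{x_1} p(x_0,x_1|a)\,p(y_1|x_1,a)$, which is independent of $a'$; and $\hat{p}(x_1',y_1'|a,a')$ equals $p(x_1',y_1'|a,a')=p(x_1',y_1'|a')$ since marginalising out the $x_0$-component recovers the original marginal.

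Second, I would repeat the construction on the primed side: glue $p(x_0'|a')$ with $\hat{p}(x_1,x_1',y_1,y_1'|a,a')$ along $x_0'$ and $x_1'$. This is possible because $d\bigl(p(x_0'|a'),\hat{p}(x_1'|a,a')\bigr)=d\bigl(p(x_0'|a'),p(x_1'|a')\bigr)\leq \eps$, and again the resulting $(x_0',x_1')$-coupling can be chosen independent of $a$. Define $\tilde{p}(x_1,x_1',y_1,y_1'|a,a')$ as the marginal of the new joint over $(x_0',x_1,y_1,y_1')$ with $x_0'$ relabelled as $x_1'$. Then $\tilde{p}(x_1'|a,a')=p(x_0'|a')$; the marginal in $x_1$ is unchanged by this second step and still equals $p(x_0|a)$; and by the triangle inequality $d\bigl(\tilde{p},p\bigr)\leq 2\eps$.

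The main obstacle is verifying that the non-signaling property survives the second gluing. I would argue this by symmetry of the first step: $\tilde{p}(x_1',y_1'|a,a')$ factors through the $a$-independent coupling $p(x_0',x_1'|a')$ and $\hat{p}(y_1'|x_1',a')$ (using that $\hat{p}$ was already non-signaling on the primed side), so it depends only on $a'$; and $\tilde{p}(x_1,y_1|a,a')$ equals the corresponding marginal $\hat{p}(x_1,y_1|a,a')$ of the pre-second-gluing distribution, which we already showed depends only on $a$. Together with the two closeness bounds and the correct single-variable marginals, this yields the desired $\tilde{p}$.
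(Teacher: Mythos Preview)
Your proposal is correct and follows essentially the same approach as the paper: two successive gluings, one per side, each costing at most $\eps$ in statistical distance, with the non-signaling property checked by observing that the relevant marginals factor through quantities that depend only on $a$ (respectively $a'$). The only cosmetic difference is that you explicitly choose the coupling $p(x_0,x_1|a,a')$ to be independent of $a'$ up front, whereas the paper appeals to Remark~\ref{rem:commute} to reach the same conclusion about $p(x_0,y_1|a,a')$; these are the same argument phrased slightly differently.
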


With these two lemmas, Theorem~\ref{thm:imperfect} is easy to prove.

\begin{proof}[Proof of Theorem~\ref{thm:imperfect}]
We start with a $\eps$-hiding non-signaling bit-commitment scheme
$\Com$. We apply Lemma~\ref{lemma:imperfect2} and obtain a distribution
$\tilde p(x_1,x_1',y_1,y_1'|a,a')$ which is $2\eps$-close to 
$p(x_1,x_1',y_1,y_1'|a,a')$ and satisfies 
$\tilde{p}(x_1|a,a') = p(x_0|a,a')$ and $\tilde{p}(x_1'|a,a') = p(x_0'|a,a')$.
Furthermore, by triangle inequality
$$
d\bigl(\tilde{p}(x_1,x_1'|a,a'),p(x_0,x_0'|a,a')\bigr) \leq 3\eps \, .
$$
Thus, replacing $p(x_1,x_1',y_1,y_1|a,a')$ by
$\tilde{p}(x_1,x_1',y_1,y_1'|a,a')$ gives us a $3\eps$-hiding two-prover commitment scheme that satisfies the extra assumption in Lemma~\ref{lemma:imperfect1}. As a result, we obtain a distribution $p'(x_1,x_1',y_1,y_1'|a,a')$ that is $3\eps$-close to $\tilde{p}(x_1,x_1',y_1,y_1'|a,a')$, and thus $5\eps$-close to 
$p(x_1,x_1',y_1,y_1'|a,a')$, with the property that
$p'(x_1,x_1'|a,a') = p(x_0,x_0'|a,a')$. Therefore, replacing
$\tilde{p}(x_1,x_1',y_1,y_1'|a,a')$ by $p'(x_1,x_1',y_1,y_1'|a,a')$ gives us a {\em perfectly hiding} two-prover commitment scheme, to which we can apply Theorem~\ref{thm:impossibility}. As a consequence, there exists a non-signaling strategy $q(x,x',y,y'|a,a')$ with $\dProb _q[\Acc |0] = \Prob _p[\Acc |0]$
and $\dProb _q[\Acc |1] \geq \Prob _p[\Acc |1]-5\eps$, as claimed.
\end{proof}

\begin{remark}
If $\Com$ already satisfies $p(x_0|a,a') = p(x_1|a,a')$ and
$p(x_0'|a,a') = p(x_1'|a,a')$, we can apply Lemma~\ref{lemma:imperfect1} right
away and thus get a strategy $q$ with $\dProb _q[\Acc |0] = \Prob _p[\Acc |0]$
and $\dProb _q[\Acc |1]\geq \Prob _p[\Acc |1] - \eps$. Thus, with this
additional condition, we still obtain a tight bound as in
Theorem~\ref{thm:tight}.
\end{remark}

\subsection{Multi-Round Schemes}

We briefly discuss a limited extension of our impossibility results for single-round schemes to schemes where during the commit phase, there is multi-round interaction between the verifier $V$ and the two provers $P$ and $Q$. We still assume the opening phase to be one-round; this is without loss of generality in case of {\em classical} two-prover commitment schemes (where the honest provers are restricted to be classical). In this setting, we have the following impossibility result, which is restricted to perfectly-hiding schemes. 

\begin{theorem}
Let $\Com$ be a multi-round two-prover commitment scheme.
If $\Com$ is perfectly hiding, then there exists a non-signaling two-prover strategy that completely breaks the binding property, in the sense of Theorem~\ref{thm:impossibility}. 
\end{theorem}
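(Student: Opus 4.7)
The plan is to extend the argument of Theorem~\ref{thm:impossibility} by treating the entire multi-round commit phase as a single ``mega-round'' of interaction. I collect into $\vec a, \vec x$ all messages exchanged between $V$ and $P$ during the commit phase, and similarly $\vec a', \vec x'$ on $Q$'s side, so that the honest provers' joint behaviour is described by conditional distributions $p_b(\vec x, \vec x', y, y' | \vec a, \vec a')$ for $b \in \set{0,1}$. These are non-signaling in the obvious multi-round generalisation of Definition~\ref{def:ns2}, and perfect hiding translates into $p_0(\vec x, \vec x' | \vec a, \vec a') = p_1(\vec x, \vec x' | \vec a, \vec a') =: \pi(\vec x, \vec x' | \vec a, \vec a')$ for all $\vec a, \vec a'$.

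Following the template of the proof of Theorem~\ref{thm:impossibility}, I would glue $p_0(\vec x, \vec x', y_0, y_0' | \vec a, \vec a')$ and $p_1(\vec x, \vec x', y_1, y_1' | \vec a, \vec a')$ along the common commit-phase marginal $\pi$ (this is the degenerate, zero-distance case of Lemma~\ref{lemma:gluing}). That produces, for each $(\vec a, \vec a')$, a joint distribution
$$
p(\vec x, \vec x', y_0, y_0', y_1, y_1' | \vec a, \vec a') = \pi(\vec x, \vec x' | \vec a, \vec a') \cdot p_0(y_0, y_0' | \vec x, \vec x', \vec a, \vec a') \cdot p_1(y_1, y_1' | \vec x, \vec x', \vec a, \vec a')
$$
with the correct $b$-marginals $p(\vec x, \vec x', y_b, y_b' | \vec a, \vec a') = p_b(\vec x, \vec x', y_b, y_b' | \vec a, \vec a')$. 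I would then define the dishonest strategy in exact analogy with the single-round case:
$$
q(\vec x, \vec x', y, y' | \vec a, \vec a', b, b') := p(\vec x, \vec x', y_b, y_{b'}' | \vec a, \vec a').
$$
The acceptance calculation is immediate: plugging in $b = b'$ gives $q(\vec x, \vec x', y, y' | \vec a, \vec a', b, b) = p_b(\vec x, \vec x', y, y' | \vec a, \vec a')$, hence $\dProb_q[\Acc | b] = \Prob_p[\Acc | b]$ for each $b$, matching the honest success probabilities.

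The remaining task, which is also the main potential obstacle, is verifying that $q$ genuinely constitutes a non-signaling multi-round strategy. This involves first writing down the right generalisation of the constraints (C1)--(NS2) of Definition~\ref{def:ns2} to multiple commit rounds, and then checking each such constraint for $q$. For the latter, I expect the calculations to mirror those in Theorem~\ref{thm:impossibility}'s proof: the commit-phase marginal $q(\vec x, \vec x' | \vec a, \vec a', b, b') = \pi(\vec x, \vec x' | \vec a, \vec a')$ is independent of $b, b'$ and is non-signaling by construction (since $p_0$, equivalently $p_1$, is), handling causality and within-commit non-signaling; and marginals of $q$ that involve $y$ collapse, after summing out the opposite-side opening $y_{b'}'$, to marginals of the honest $p_b$, e.g., $q(\vec x, y | \vec a, \vec a', b, b') = p_b(\vec x, y | \vec a, \vec a') = p_b(\vec x, y | \vec a)$, with the last step being the non-signaling of $p_b$. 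Thus each constraint for $q$ is inherited from the non-signaling of the honest scheme, so the verification is pure bookkeeping and does not need any new ingredient beyond careful indexing across the rounds.
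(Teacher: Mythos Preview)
Your proposal is correct and follows essentially the same approach as the paper: collect the multi-round commit-phase messages into vectors $\mathbf{a},\mathbf{x},\mathbf{a}',\mathbf{x}'$, glue the two honest distributions along their (identical, by perfect hiding) commit-phase marginals, and set $q(\mathbf{x},\mathbf{x}',y,y'|\mathbf{a},\mathbf{a}',b,b') := p(\mathbf{x}_0,\mathbf{x}_0',y_b,y'_{b'}|\mathbf{a},\mathbf{a}')$. The paper's own proof is only a sketch at the same level of detail as yours, and likewise flags that the main remaining work is formulating and checking the multi-round non-signaling constraints.
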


A formal proof of this statement requires a definition of $n$-round non-signaling bipartite systems for arbitrary $n$. Such a definition can be based on the intuition that it must be possible to
query the left and right subsystem in any order. With this definition,
the proof is a straightforward extension of the proof of Theorem~\ref{thm:impossibility}: the non-signaling strategy is obtained by gluing together $p({\bf x}_0,{\bf x}_0'|{\bf a},{\bf a}')$ and $p({\bf x}_1,{\bf x}_1'|{\bf a},{\bf a}')$ along $({\bf x}_0,{\bf x}_0')$ and $({\bf x}_1,{\bf x}_1')$, and setting $q({\bf x},{\bf x}',y,y'|{\bf a},{\bf a}',b,b') := p({\bf x}_0,{\bf x}_0',y_b,y_{b'}'|{\bf a},{\bf a}')$, where we use bold-face notation for the vectors that collect the messages sent during the multi-round commit phase: ${\bf a}$ collects all the messages sent by the verifier to the prover $P$, etc. 

As far as we see, the proof of the non-perfect case, i.e. Theorem~\ref{thm:imperfect}, does not generalize immediately to the multi-round case. As such, proving the impossibility of {\em non-perfectly-hiding multi-round} two-prover commitment schemes remains an open problem.

\section{Possibility of Three-Prover Commitments}

It turns out that we can overcome the impossibility results by adding a third
prover. We will describe a scheme that is perfectly sound, perfectly hiding
and $2^{-n}$-binding with communication complexity $O(n)$.
We now define what it means for three provers to be non-signaling;
since our scheme is similar to a simple scheme, we can simplify this somewhat.
We consider distributions $q(x,y,z|a,b,c)$ where $a$ and $x$ are input
and output of the first prover $P$, $b$ and $y$ are input and output
of the second prover $Q$ and $c$ and $z$ are input and output of
the third prover $R$.
\begin{definition}
A conditional distribution $q(x,y,z|a,b,c)$ is called a {\em non-signaling (one-round)
tripartite system} if it satisfies 
\begin{align*}
  &q(x|a,b,c) = q(x|a) \;\text{,}\quad q(y|a,b,c) = q(y|b) \;\text{,}\quad q(z|a,b,c) = q(z|c) \;\text{,}\\
  &q(x,y|a,b,c) = q(x,y|a,b) \;\text{,}\quad q(x,z|a,b,c) = q(x,z|a,c)\\
  &\text{and}\quad q(y,z|a,b,c) = q(y,z|b,c) \, .
\end{align*}
\end{definition}
In other words, for any way of viewing $q$ as a bipartite system by dividing in- and outputs consistently into two groups, we get a non-signaling bipartite system. Actually, by means of Lemma~\ref{lemma:cut}, it is not hard to see that the first three requirements follow by the (union of the) latter three. 

We restrict to {\em simple} schemes, where during the commit phase, only $P$ is active, sending $x$ upon receiving $a$ from the verifier, and during the opening phase, only $Q$ and $R$ are active, sending $y$ and $z$ to the verifier, respectively. 

\begin{definition}
A {\em simple three-prover commitment scheme} $\Com$ consists of
a probability distribution $p(a)$, two distributions $p_0(x,y,z|a)$ and
$p_1(x,y,z|a)$, and an acceptance predicate $\Acc(x,y,z|a,b)$. \\
It is called classical/quantum/non-signaling if $p_b(x,y,z|a)$ is, when understood as a tripartite system $p_b(x,y,z|a,\emptyset,\emptyset)$ with two ``empty'' inputs. 
\end{definition}
Soundness and the hiding-property are defined in the obvious way. 
As for the binding property, for a simple three-prover commitment scheme $\Com$ and a non-signaling strategy $q(x,y,z|a,b,c)$, let
$$
  \Prob _q^* [\Acc |b]
  = \sum _{a,x,y,z} p(a)\cdot q(x,y,z|a,b,b)\cdot \Acc (x,y,z|a,b) \, .
$$
We say that $\Com$ is $\delta$-binding if
$$\Prob _q^* [\Acc |0] + \Prob _q^*[\Acc |1]
  \leq 1 +\delta\text .$$

\begin{theorem}
\label{thm:multi}
For every positive integer $n$, there exists a classical simple three-prover commitment
scheme that is perfectly sound, perfectly hiding and $2^{-n}$-binding. The
verifier communicates $n$ bits to the first prover and receives $n$ bits from
each prover.
\end{theorem}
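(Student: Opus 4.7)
The plan is to construct a simple three-prover analogue of the BGKW scheme augmented with a consistency check between the two opening provers, and to bound the binding error via a single application of the non-signaling marginals at a carefully chosen ``hybrid'' input $(b,c) = (0,1)$. Concretely: the three provers share a uniform random $r \in \{0,1\}^n$; the verifier picks $a \in \{0,1\}^n$ uniformly and sends it to $P$; $P$ replies $x := r \oplus a\cdot b$; in the opening phase $Q$ sends $y := r$ and $R$ sends $z := r$; and the verifier accepts iff $y=z$ and $x = y \oplus a \cdot b$. Perfect soundness and perfect hiding are immediate (the marginal distribution of $x$ given $a$ is uniform on $\{0,1\}^n$ for either bit $b$), and the verifier exchanges exactly $n$ bits in each direction with each prover.

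For the binding property, I would fix an arbitrary non-signaling tripartite strategy $q(x,y,z|a,b,c)$ and write $u_b(a) := q(y=z,\, x = y\oplus a\cdot b \mid a,b,b)$, so that $\Prob_q^*[\Acc|b] = 2^{-n}\sum_a u_b(a)$. The goal is then to prove $u_0(a) + u_1(a) \leq 1 + q(y\oplus z = a \mid 0, 1)$ for every $a$, because averaging over uniform $a$ and using $\sum_a q(y\oplus z = a\mid 0,1) = 1$ immediately yields the desired $1 + 2^{-n}$ bound. To reach this inequality I would first apply the bipartite non-signaling marginals: dropping the constraint $y=z$ in $u_0(a)$ and using $q(x,y|a,b,c)=q(x,y|a,b)$ yields $u_0(a) \leq q(x=y\mid a,0,0) = q(x=y\mid a,0,1)$; dually, in $u_1(a)$ the conjunction $y=z$ and $x = y \oplus a$ implies $x = z\oplus a$, so by $q(x,z|a,b,c)=q(x,z|a,c)$ we get $u_1(a) \leq q(x = z\oplus a\mid a,1,1) = q(x = z\oplus a\mid a,0,1)$. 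The crucial observation is that both bounds are then evaluated at the same ``hybrid'' input $(b,c) = (0,1)$.

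The final step uses the elementary inequality $q(A) + q(B) \leq 1 + q(A\cap B)$ from~\eqref{eq:pr_sum} together with the third non-signaling relation $q(y,z|a,b,c) = q(y,z|b,c)$: the intersection event $\{x=y,\, x = z\oplus a\}$ entails $y\oplus z = a$, and the joint $(y,z)$-marginal is independent of $a$, so $q(x=y,\, x=z\oplus a \mid a, 0, 1) \leq q(y\oplus z = a \mid 0, 1)$, which closes the argument. The main conceptual obstacle is identifying the right hybrid $(b,c) = (0,1)$: it is the unique choice that simultaneously admits the $(x,y)$-marginal non-signaling on the open-to-$0$ side and the $(x,z)$-marginal non-signaling on the open-to-$1$ side, while leaving the $(y,z)$-marginal non-signaling available to dispose of the intersection term; this is precisely the tripartite structure that was unavailable in the two-prover impossibility result.
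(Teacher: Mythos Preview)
Your proposal is correct and follows essentially the same approach as the paper: the same three-prover BGKW scheme with a consistency check, the same reduction to the ``hybrid'' input $(b,c)=(0,1)$ via the $(x,y)$- and $(x,z)$-non-signaling marginals, the same use of $q(A)+q(B)\leq 1+q(A\cap B)$, and the same final bound via the $a$-independence of $q(y,z|a,b,c)$. The only cosmetic difference is that you work pointwise in $a$ (bounding $u_0(a)+u_1(a)$ and then averaging), whereas the paper absorbs $p(a)$ into the distribution from the start and carries the average throughout; the two are equivalent.
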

The scheme that achieves this is essentially the same as the example two-prover scheme described in the introduction, except that we add a third prover that imitates the actions of the second. 
To be more precise: the provers $P$, $Q$ and $R$ have as shared
randomness a uniformly random $r \in \set{0,1}^n$.
The verifier $V$ chooses a uniformly random $a \in \set{0,1}^n$ and sends
it to $P$. As commitment, $P$ returns $x := r \oplus a\cdot b$. To open the commitment to $b$, $Q$ and $R$ send $y:= r$ and $z := r$ to $V$ who accepts
if and only if $y = z$ and $x = y \oplus a\cdot b$. 

Before beginning with the formal proof that this scheme has the properties
stated in our theorem, we give some intuition.
Let $a$ and $x$ be the input and output of the dishonest first prover, $P$.
To succeed, the second prover $Q$ has to produce output $x\oplus a\cdot b$
where $b$ is the second prover's input and the third prover $R$ has
to produce $x\oplus a\cdot c$ where $c$ is the third prover's input.
Our theorem implies that a strategy which always produces these outputs must
be signaling. Why is that the case?

In the game that defines the binding-property, we always have $b=c$,
but the dishonest provers must obey the non-signaling constraint even in the
``impossible'' case that $b\neq c$. Let us consider the XOR of $Q$'s output
and $R$'s output in the case that $b\neq c$:
we get $(x \oplus a\cdot b) \oplus (x \oplus a\cdot c) =
a\cdot b \oplus a\cdot c = a$. But in the non-signaling setting, the joint
distribution of $Q$'s and $R$'s output may not depend on $a$. Thus,
the strategy we suggested does not satisfy the non-signaling constraint.
Let us now prove the theorem.
\begin{proof}[Proof of Theorem \ref{thm:multi}]
It is easy to see that the scheme is sound. Furthermore, for every fixed
$a$ and $b$, $p_b(x|a)$ is uniform, so the scheme is perfectly
hiding. Now consider a non-signaling strategy $q$ for dishonest provers.
The provers succeed if and only if $y = z = x \oplus a\cdot b$.
Define $q(a,x,y,z|b,c) = p(a)\cdot q(x,y,z|a,b,c)$.
The non-signaling property implies that
\begin{align}
  q(y = x\oplus a\cdot b|a,b,c=0) &= q(y = x\oplus a\cdot b|a,b,c=1) \quad\text{and}
  \label{eq:nonsig1}\\
  q(z = x\oplus a\cdot c | a,b=0,c) &= q(z=x\oplus a\cdot c|a,b=1,c) \, .
  \label{eq:nonsig2}
\end{align}
It follows that
\begin{align*}
\dProb _q[\Acc |0]& + \dProb _q[\Acc |1]\\
=\ &q(y = x\oplus a\cdot b, z = x\oplus a\cdot c|b=0, c=0)\\
&\quad + q(y = x\oplus a\cdot b, z = x\oplus a\cdot c|b=1, c=1)\\
\leq\ &q(y = x\oplus a\cdot b|  b=0, c= 0) + q(z=x\oplus a\cdot c|b=1,c=1)\\
=\ &q(y = x\oplus a\cdot b|b=0,c=1) + q(z=x\oplus a\cdot c|b=0,c=1)\\
  &\quad \text{by Equations \eqref{eq:nonsig1} and \eqref{eq:nonsig2}}\\
\leq\ &1+q(y = x\oplus a\cdot b, z=x\oplus a\cdot c|b=0,c=1)
  \text{ by Equation \eqref{eq:pr_sum}}
\end{align*}
It now remains to upper-bound
$q(y = x\oplus a\cdot b, z=x\oplus a\cdot c|b=0,c=1)$. Since $p(a)$ is
uniform and $q(y,z|a,b,c)$ is independent of $a$, we have
$$
  q(y=x\oplus a\cdot b, z=x\oplus a\cdot c|b=0,c=1)
  \leq q(y\oplus z = a|b=0,c=1)
  = \frac{1}{2^n}
$$
and thus our scheme is $2^{-n}$-binding.
\end{proof}

\begin{remark}\label{rem:issue}
The three-prover scheme above has the drawback that {\em two} provers are involved in the opening phase; as such, there needs to be {\em agreement} on whether to open the commitment or not; if there is disagreement then this may be problematic in certain applications. However, $P$ and $Q$ are not allowed to communicate. One possible solution is to have $V$ forward an {\em authenticated} ``open'' or ``not open'' message from $P$ to $Q$ and $R$. This allows for some communication from $P$ to $Q$ and $R$, but if the size of the authentication tag is small enough compared to the security parameter of the scheme, i.e., $n$, then security is still ensured. 
\end{remark}

\subsection*{Acknowledgements}

We would like to thank Claude Cr{\'e}peau for pointing out the issue addressed in Remark~\ref{rem:issue} and the solution sketched there, and Jed Kaniewski for helpful discussions regarding relativistic commitments.

\bibliographystyle{alpha}
\bibliography{FF}

\begin{thebibliography}{BGKW88}

\bibitem[Bel64]{bell}
John~Stewart Bell.
\newblock {On the Einstein-Podolsky-Rosen paradox}.
\newblock {\em Physics}, 1:195--200, 1964.

\bibitem[BGKW88]{BGKW88}
Michael {Ben-Or}, Shafi Goldwasser, Joe Kilian, and Avi Wigderson.
\newblock {Multi-Prover Interactive Proofs: How to Remove Intractability
  Assumptions}.
\newblock In Janos Simon, editor, {\em STOC}, pages 113--131. ACM, 1988.

\bibitem[CHSH69]{chsh}
John~F. Clauser, Michael~A. Horne, Abner Shimony, and Richard~A. Holt.
\newblock Proposed experiment to test local hidden-variable theories.
\newblock {\em Phys. Rev. Lett.}, 23:880--884, Oct 1969.

\bibitem[CSST11]{CSST11}
Claude Cr{\'e}peau, Louis Salvail, Jean-Raymond Simard, and Alain Tapp.
\newblock {Two Provers in Isolation}.
\newblock In Dong~Hoon Lee and Xiaoyun Wang, editors, {\em ASIACRYPT}, volume
  7073 of {\em Lecture Notes in Computer Science}, pages 407--430. Springer,
  2011.

\bibitem[DMS00]{DMS00}
Paul Dumais, Dominic Mayers, and Louis Salvail.
\newblock Perfectly concealing quantum bit commitment from any quantum one-way
  permutation.
\newblock In Bart Preneel, editor, {\em Advances in Cryptology - {EUROCRYPT}
  2000, International Conference on the Theory and Application of Cryptographic
  Techniques, Bruges, Belgium, May 14-18, 2000, Proceeding}, volume 1807 of
  {\em Lecture Notes in Computer Science}, pages 300--315. Springer, 2000.

\bibitem[Ken99]{Kent99}
Adrian Kent.
\newblock Unconditionally secure bit commitment.
\newblock {\em Phys. Rev. Lett.}, 83:1447--1450, 1999.

\bibitem[KRR14]{KRR14}
Yael~Tauman Kalai, Ran Raz, and Ron~D. Rothblum.
\newblock How to delegate computations: the power of no-signaling proofs.
\newblock In David~B. Shmoys, editor, {\em Symposium on Theory of Computing,
  {STOC} 2014, New York, NY, USA, May 31 - June 03, 2014}, pages 485--494.
  {ACM}, 2014.

\bibitem[LC97]{LC97}
Hoi-Kwong Lo and H.~F. Chau.
\newblock Is quantum bit commitment really possible?
\newblock {\em Phys. Rev. Lett.}, 78:3410--3413, Apr 1997.

\bibitem[May97]{Mayers97}
Dominic Mayers.
\newblock {Unconditionally Secure Quantum Bit Commitment is Impossible}.
\newblock {\em Phys. Rev. Lett.}, 18:3414--3417, 1997.

\bibitem[PR94]{PR94}
Sandu Popescu and Daniel Rohrlich.
\newblock Quantum nonlocality as an axiom.
\newblock {\em Foundations of Physics}, 24(3):379--385, 1994.

\bibitem[Ras85]{Ras85}
Peter Rastall.
\newblock Locality, bell's theorem, and quantum mechanics.
\newblock {\em Foundations of Physics}, 15(9):963--972, 1985.

\bibitem[RK05]{RK05}
Renato Renner and Robert K{\"o}nig.
\newblock {Universally Composable Privacy Amplification Against Quantum
  Adversaries}.
\newblock In Joe Kilian, editor, {\em TCC}, volume 3378 of {\em Lecture Notes
  in Computer Science}, pages 407--425. Springer, 2005.

\bibitem[TK85]{KT85}
Boris~S. Tsirelson and Leonid~A. Khalfin.
\newblock {Quantum and quasi-classical analogs of Bell inequalities}.
\newblock In {\em {Symposium on the Foundations of Modern Physics}}, pages
  441--460, 1985.

\bibitem[Yan13]{Yang13}
Nan Yang.
\newblock {Zero-Knowledge Multi-Prover Interactive Proofs}.
\newblock Master's thesis, Concordia University Montreal, 2013.

\end{thebibliography}

\begin{appendix}

\section{Capturing the Non-signaling property}
\label{app:sufficient}
In this section, we argue that Definition \ref{def:ns2} is not only necessary
but also sufficient to capture the non-signaling constraint.
Consider a two-round bipartite system that conforms to Definition \ref{def:ns2}.
We show that the two subsystems can be queried {\em in any order} without
altering the output distribution, as long as the order of rounds for each
subsystem individually is respected. Thus, it is impossible to obtain
information about the right side of the system by observing only the behaviour
on the left side (and vice versa), which shows that Definition \ref{def:ns2} is
indeed sufficient. First, we point out the following. 

\begin{remark}\label{rem:connections}
(C1) and (NS1) together imply that $q(x|a,b)$ and $q(x|a,a')$ are well-defined and satisfy
$$
q(x|a,b) = q(x|a) \quad\text{(C3)}  \qquad\text{and}\qquad q(x|a,a') = q(x|a) \qquad \text{(NS3)} \, .
$$
This follows from Lemma~\ref{lemma:cut} below. 
\end{remark}

\begin{lemma}\label{lemma:cut}
Any conditional distribution $q(x|a,b,c,d)$ that satisfies $q(x|a,b,c,d) = q(x|a,b)$ as well as $q(x|a,b,c,d) = q(x|a,c)$, must also satisfy $q(x|a,b,c,d) = q(x|a)$. 
\end{lemma}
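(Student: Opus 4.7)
The plan is to unwind the two hypotheses using the notational convention from Remark~\ref{rem:convention}, combine them pointwise, and observe that a quantity which is simultaneously independent of the pair $(c,d)$ and independent of the pair $(b,d)$ must be independent of $(b,c,d)$ altogether.

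More concretely, first I would note that the hypothesis $q(x|a,b,c,d) = q(x|a,b)$ means, by the convention, that $q(x|a,b,c,d)$ depends only on $x,a,b$; write it as $f(x,a,b)$. Similarly, $q(x|a,b,c,d) = q(x|a,c)$ means $q(x|a,b,c,d)$ depends only on $x,a,c$; write it as $g(x,a,c)$. Hence $f(x,a,b) = g(x,a,c)$ for every choice of $x,a,b,c$ in the corresponding (finite, nonempty) conditioning sets.

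Next, I would fix $x$ and $a$ arbitrarily and pick any particular value $c_0$ from the conditioning set for $c$. The identity $f(x,a,b) = g(x,a,c_0)$ then holds for all $b$; since the right-hand side does not depend on $b$, neither does $f(x,a,b)$. Therefore $q(x|a,b,c,d) = f(x,a,b)$ depends only on $x$ and $a$, which is precisely what the conclusion $q(x|a,b,c,d) = q(x|a)$ asserts under the convention.

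The only potential subtlety, and the reason I would want to make the argument explicit, is that one must use the fact that conditional distributions in this paper are defined on the entire product of their conditioning sets (rather than only on the support of some underlying joint distribution); this is what allows $b$ and $c$ to be instantiated independently when equating $f$ and $g$. Were one instead working with a joint distribution whose $(b,c)$-marginal had arbitrary support, a counterexample could easily be constructed, so this point, although trivial here, is the one place where the proof genuinely uses the framework set up in the preliminaries.
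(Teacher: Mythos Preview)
Your argument is correct and is essentially the same as the paper's: both unwind the convention from Remark~\ref{rem:convention} and chain the two independence assumptions to show $q(x|a,b,c,d)=q(x|a,b',c',d')$ for all $b',c',d'$, the paper doing this directly via $q(x|a,b,c,d)=q(x|a,b,c',d')=q(x|a,b',c',d')$ and you via the auxiliary functions $f,g$. Your remark about conditional distributions being defined on the full product of conditioning sets is apt and implicit in the paper's framework.
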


\begin{proof}
Recall that, by convention, $q(x|a,b,c,d) = q(x|a,b)$ means $q(x|a,b,c,d) = q(x|a,b,c',d')$ for all $x,a,b,c,c',d,d'$, and similarly for $q(x|a,b,c,d) = q(x|a,c)$. As such, for arbitrary $x,a,b,b',c,c',d,d'$ it holds that
$$q(x|a,b,c,d) = q(x|a,b,c',d') = q(x|a,b',c',d')$$
and thus $q(x|a,b,c,d) = q(x|a)$. 
\end{proof}
If $q(x,x',y,y'|a,a',b,b')$ is a non-signaling two-round bipartite system, then it can be written as
\begin{align*}
q(x,x',y,y'|a,a',b,b') &= q(x,y|a,b) \cdot q(x',y'|x,y,a,a',b,b') \\
&= q(x|a) \cdot q(y|x,a,b) \cdot q(x'|x,y,a,a',b) \cdot q(y'|x,y,a,a',b,b')
\end{align*}
where the first equality uses (NS1), and the second uses (C3) and (C2), and as
\begin{align*}
&q(x,x',y,y'|a,a',b,b')\\
=\quad &q(x,x'|a,a') \cdot q(y,y'|x,x',a,a',b,b')\\
=\quad & q(x|a) \cdot q(x'|x,a,a') \cdot q(y|x,x',a,a',b) \cdot q(y'|x,x',y,a,a',b,b')
\end{align*}
where the first equality uses (C1), and the second uses (NS3) and (NS2), and the second equality can also be replaced by 
$$
= q(x|a) \cdot q(x'|x,a,a') \cdot q(y'|x,x',a,a',b') \cdot q(y|x,x',y,a,a',b,b') \, .
$$
And, similarly, with the roles of the primed and unprimed variables exchanged. 
This shows that the two subsystems can be queried in any order. For instance, one can first query the left subsystem to get $x$ on input $a$, distributed according to $q(x|a)$, and then $y$ on input $b$, distributed according to $q(y|x,a,b)$, and then then one can query the right subsystem twice to get $x'$ and $y'$, distributed according to $q(x'|x,y,a,a',b)$ and $q(y'|x,y,a,a',b,b')$, respectively.%
\footnote{Note that in oder to sample, say, $x'$ according to $q(x'|x,y,a,a',b)$, it seems like that the right subsystem needs to know $a,x$ etc., i.e., that communication is necessary, contradicting the non-signaling requirement. However, this reasoning merely shows that in general, such a non-signaling system is not classical. 
 }
Or, one can first query the left subsystem once to obtain $x$, then query the right subsystem to obtain $x'$ etc.  
It is straightforward to verify that all six eligible orderings are possible. 
\section{Proofs of Lemma \ref{lemma:imperfect1} and Lemma
  \ref{lemma:imperfect2}}
\label{app:proof}
\begin{proof}[Proof of Lemma \ref{lemma:imperfect1}]
For arbitrary $a$ and $a'$, we use Lemma~\ref{lemma:gluing} to glue together the distributions $p(x_0,x_0',y_0,y_0'|a,a')$ and $p(x_1,x_1',y_1,y_1'|a,a')$ to obtain a joint distribution 
$p(x_0,x_0',x_1,x_1',y_0,y_0',y_1,y_1'|a,a')$ such that
$$
p\bigl((x_0,x_0')\neq (x_1,x_1')|a,a'\bigr)\leq \eps \, ,
$$
and thus $d(p(x_0,x_0',y_1,y_1'|a,a'),p(x_1,x_1',y_1,y_1'|a,a'))\leq \eps$.
Let $\Lambda$ be the event that both $x_0 = x_1$ and $x_0' = x_1'$. We define $p'(x_1,x_1',y_1,y_1'|a,a')$ as follows, where $x_0$ is associated with $x_1$ and $x'_0$ with $x'_1$:
\begin{align*}
  p'(x_1,x_1',y_1,y_1'|a,a')
  &:= p(\Lambda,x_0,x_0'|a,a')\cdot p(y_1,y_1'|\Lambda,x_1,x_1',a,a')\\
  &\qquad + p(\overline{\Lambda},x_0,x_0'|a,a')\cdot r(y_1|x_0,a,a')
    \cdot r(y_1'|x_0',a,a')\\
  &= p(\Lambda,x_1,x_1',y_1,y_1'|a,a')\\
  & \qquad +p(\overline{\Lambda},x_0,x_0'|a,a')
    \cdot r(y_1|x_0,a,a')\cdot r(y_1'|x_0',a,a')
\end{align*}
where $r(y_1|x_0,a,a')$ and $r(y_1'|x_0',a,a')$ are to be defined later, and the last equality holds by definition of~$\Lambda$.%
\footnote{Algorithmically, the distribution $p'$ should be understood as follows. First, $x_0,x_0',x_1$ and $x_1'$ are sampled according to the glued-together distribution $p$. Then, if the event $\Lambda$ occurred (i.e.~$x_0 = x_1$ and $x'_0 = x'_1$), $y_1$ and $y'_1$ are sampled according to the corresponding conditional distribution; otherwise, they are chosen {\em independently} according to distributions that depend only on $x_0$ and $x'_0$, respectively. }

The claim about the closeness to $p(x_1,x_1',y_1,y_1'|a,a')$ follows from the fact that $p(\overline{\Lambda}|a,a') \leq \eps$. Furthermore, we have $p'(x_1,x_1'|a,a')
= p(\Lambda,x_0,x_0'|a,a') + p(\overline{\Lambda},x_0,x_0'|a,a') = p(x_0,x_0'|a,a')$ as claimed.

It remains to show that we can achieve $p'$ to be non-signaling. 
For that, we simply define $r(y_1|x_0,a,a')$, and similarly $r(y_1'|x_0',a,a')$, in such a way that $p'(x_1,y_1|a,a') = p(x_1,y_1|a,a')$; this does the job since $p(x_1,y_1|a,a') = p(x_1,y_1|a)$, and as such $p'(x_1,y_1|a,a') = p'(x_1,y_1|a)$. 
Note that
\begin{equation}
  \label{eq:def_r}
  p'(x_1,y_1|a,a') = p(\Lambda,x_1,y_1|a,a')
  + p(\overline{\Lambda},x_0|a,a')\cdot r(y_1|x_0,a,a') \, .
\end{equation}
Thus, we set
$$
  r(y_1|x_0,a,a') := \frac{p(x_1,y_1|a,a')-p(\Lambda,x_1,y_1|a,a')}
    {p(\overline{\Lambda},x_0|a,a')}
  = \frac{p(\overline{\Lambda},x_1,y_1|a,a')}{p(\overline{\Lambda},x_0|a,a')}
$$
It remains to show that $r(y_1|x_0,a,a')$ as defined is indeed a probability distribution, and that things work out also in case $p(\overline{\Lambda},x_0|a,a') = 0$. 

In the latter case, we have $p'(x_1,y_1|a,a') = p(\Lambda,x_1,y_1|a,a')$, independent of the choice of $r$; thus, it remains to show that $p(\Lambda,x_1,y_1|a,a') = p(x_1,y_1|a,a')$.  For that, we observe that $p(\Lambda,x_1|a,a') = p(\Lambda,x_0|a,a') = p(x_0|a,a') = p(x_1|a,a')$, where the first equality is due to the definition of $\Lambda$ and the last holds
by our additional assumption on $\Com$. It follows that
$$
  \sum _{y_1}p(\Lambda,x_1,y_1|a,a') = p(\Lambda,x_1|a,a') = p(x_1|a,a')
  = \sum _{y_1}p(x_1,y_1|a,a')
$$
and since $p(\Lambda,x_1,y_1|a,a') \leq p(x_1,y_1|a,a')$, it holds that $p(\Lambda,x_1,y_1|a,a') = p(x_1,y_1|a,a')$ as required.

Finally, to show that $r(y_1|x_0,a,a')$ is a probability distribution, we observe that $r(y_1|x_0,a,a') \geq 0$, and, summing over
$y_1$ and using that $p(x_0|a,a') = p(x_1|a,a')$, we see that
\begin{align*}
  \sum _{y_1} r(y_1|x_0,a,a')
  = \frac{p(x_1|a,a') - p(\Lambda,x_1|a,a')}{p(\overline{\Lambda}, x_0|a,a')}
  &= \frac{p(x_0|a,a') - p(\Lambda,x_0|a,a')}{p(\overline{\Lambda}, x_0|a,a')}\\
  &= \frac{p(\overline{\Lambda}, x_0|a,a')}{p(\overline{\Lambda}, x_0|a,a')}\\
  &= 1 \, .
\end{align*}
In the same way, it is possible to choose $r(y_1'|x_0',a,a')$ so that $p'(x'_1,y'_1|a,a') = p(x'_1,y'_1|a,a') = p(x'_1,y'_1|a')$, using the assumption that
$p(x_0'|a,a') = p(x_1'|a,a')$. This concludes the proof.
\end{proof}
\begin{proof}[Proof of Lemma \ref{lemma:imperfect2}]
We begin by adjusting the distribution of $x_1$. By the hiding property of $\Com$,  $p(x_0,x_0'|a,a')$ and $p(x_1,x_1'|a,a')$ are $\eps$-close, 
and thus in particular $d(p(x_0|a,a'), p(x_1|a,a'))\leq \eps$.
Gluing together the distributions $p(x_0|a,a')$ and $p(x_1,x_1',y_1,y_1'|a,a')$
along $x_0$ and $x_1$, we get $p(x_0,x_1,x_1',y_1,y_1'|a,a')$ such that 
$$
p'(x_1,x_1',y_1,y_1'|a,a') := p(x_0,x_1',y_1,y_1'|a,a')
$$ 
satisfies  $d\bigl(p'(x_1,x_1',y_1,y_1'|a,a'),p(x_1,x_1',y_1,y_1'|a,a')\bigr)\leq\eps$ and also $p'(x_1|a,a') = p(x_0|a,a')$.

We show that $p'$ is non-signaling. Since $p'(x_1',y_1'|a,a') = p(x_1',y_1'|a,a')$ and $p$ is non-signaling,
it follows that $p'(x_1',y_1'|a,a') = p'(x_1',y_1'|a')$. 
Showing that $p'(x_1,y_1|a,a') = p'(x_1,y_1|a)$ is equivalent to showing that $p(x_0,y_1|a,a') = p(x_0,y_1|a)$. 
By the observation in Remark~\ref{rem:commute}, the marginal $p(x_0,x_1,y_1|a,a')$ is obtained by gluing together $p(x_0|a,a')$ and $p(x_1,y_1|a,a')$ along $x_0$ and $x_1$.
Since $\Com$ is non-signaling, it holds that $p(x_0|a,a') = p(x_0|a)$ and
$p(x_1,y_1|a,a') = p(x_1,y_1|a)$. It follows that
$p(x_0,x_1,y_1|a,a') = p(x_0,x_1,y_1|a)$, and therefore that
$p(x_0,y_1|a,a') = p(x_0,y_1|a)$.


In order to obtain $\tilde{p}$ as claimed, we repeat the above process. Note that the modification from $p$ to $p'$ did not change the distribution of $x_1',y_1'$, i.e., $p'(x_1',y_1'|a,a') = p(x_1',y_1'|a,a')$, and thus in particular $d\bigl(p(x'_0|a,a'), p'(x'_1|a,a')\bigr) = d\bigl(p(x'_0|a,a'), p(x'_1|a,a')\bigr) \leq \eps$. Therefore, exactly as above, we can now adjust the distribution of $x'_1$ in $p'$ and obtain a non-signaling $\tilde{p}(x_1,x_1',y_1,y_1'|a,a')$ that is $\eps$-close to $p'(x_1,x_1',y_1,y_1'|a,a')$ and thus $2\eps$-close to $p(x_1,x_1',y_1,y_1'|a,a')$, and which satisfies $\tilde{p}(x_1'|a,a') = p(x_0'|a,a')$ and 
$\tilde{p}(x_1|a,a') = p'(x_1|a,a') = p(x_0|a,a')$, as claimed. 
\end{proof}

\end{appendix}

\end{document}